\documentclass[10pt,twocolumn,twoside]{IEEEtran}

\usepackage{hyperref,color} 
\usepackage[noadjust]{cite}
\usepackage{bm}
\usepackage{amsfonts, amsmath, amssymb, bbm}
\usepackage[framemethod=tikz]{mdframed}
\usepackage{graphicx}\usepackage{pgfplots,float,makecell}\usetikzlibrary{arrows}
\usepackage{tikz}\usepackage{textcomp}\usepackage{enumerate}
\usetikzlibrary{shapes.geometric}
\usepackage[textwidth=30mm]{todonotes}

\usetikzlibrary{positioning}
\usepackage{subfigure}
\newtheorem{definition}{Definition}
\newtheorem{theorem}{Theorem}
\newtheorem{corollary}{Corollary}
\newtheorem{lemma}{Lemma}

\newtheorem{proposition}{Proposition}
\newtheorem{example}{Example}
\newtheorem{remark}{Remark}

\def\qed{\hfill \vrule height 7pt width 7pt depth 0pt\medskip}
\newcommand{\ds}{\displaystyle}

\newcommand{\ba}{\begin{array}}
\newcommand{\ea}{\end{array}}

\newcommand{\1}{\mathbbm{1}}
\renewcommand{\l}{\left}\renewcommand{\r}{\right}
\newcommand{\be}{\begin{equation}}
\newcommand{\ee}{\end{equation}}

\newcommand{\ups}{\upsilon}

\newcommand{\mc}{\mathcal}

\newcommand{\E}{\mathbb{E}}

\newcommand{\R}{\mathbb{R}}

\newcommand{\se}{\text{ if }}

\DeclareMathOperator{\tr}{tr}

\def\E{\mathbb{E}}
\def\R{\mathbb{R}}

\begin{document}
 \title{Targeting Interventions for Volatility Reduction in Network Systems---Part I: Idiosyncratic Shocks}
\author{Luca Damonte, Leonardo Cianfanelli, Giacomo Como, and Fabio Fagnani\thanks{Some of the results in the paper appeared in preliminary form in \cite{damonte2020systemic,cianfanelli2026cdc}.}
\thanks{
	L.~Cianfanelli, G.~Como and F.~Fagnani are with the
	Department of Mathematical Sciences “G.L. Lagrange,” Politecnico di Torino,
	10129 Torino, Italy (e-mail: {leonardo.cianfanelli; giacomo.como; fabio.fagnani}@polito.it).
	G.~Como is also with the Department of Automatic Control, Lund University,
	22100 Lund, Sweden.
	L.~Damonte is with the Department of Economics and Finance of LUISS University, 00198 Rome, Italy (e-mail: ldamonte@luiss.it). }
\thanks{This study was carried out within the FAIR - Future Artificial Intelligence Research and received funding from the European Union Next-GenerationEU (PIANO NAZIONALE DI RIPRESA E RESILIENZA (PNRR) – MISSIONE 4 COMPONENTE 2, INVESTIMENTO 1.3 – D.D. 1555 11/10/2022, PE00000013). This manuscript reflects only the authors’ views and opinions, neither the European Union nor the European Commission can be considered responsible for them.}
} 

\maketitle
\thispagestyle{empty}

\begin{abstract}We study optimal intervention design problems aimed at minimizing the aggregate volatility  in network systems. 
We consider linear network equilibrium models emerging in a variety of applications such as network games, production networks, and opinion dynamics. In such interconnected systems, it is well known that exogenous shocks hitting a small subset of nodes can propagate through the network and possibly cause detrimental systemic effects.  
We focus on  idiosyncratic shocks and first analyze the dependence of various measures of the aggregate system volatility on the network structure. In particular, we identify a general network centrality measure that determines the amplification factor of the variance of the individual nodes' shocks to the aggregate system volatility.  We also show how, in certain relevant special cases, such centrality measure reduces to notions that have already been proposed in the network systems literature such as  Katz-Bonacich, cycle, and concentration centralities. 

We then consider optimal network intervention design problems whereby a system planner has the ability to attenuate the effect of the exogenous shocks in some nodes, within a given budget constraint, and aims at minimizing the resulting aggregate system volatility.  We analytically solve the ensuing optimal volatility reduction problems, showing, in particular, that their solution has a \emph{water-filling-type} structure. On the one hand, for low budget levels, the optimal intervention is concentrated on the node(s) $i$ where a quantity $\omega_i$, accounting for the variance of the shock on $i$, the network centrality of $i$, and the cost of the unitary intervention on $i$, is maximized. On the other hand, as the budget increases, the optimal intervention targets more and more nodes, and eventually, for sufficiently large budget levels, it acts on every node $i$ with $\omega_i>0$, with an intensity proportional to $\omega_i$.  

Finally, we study robust versions of the optimal network intervention design problem, modeled as min-max optimization problems  whereby the system planner is interested in designing an intervention that minimizes the aggregate system volatility against a whole set of idiosyncratic shocks. 
These may be interpreted as adversarial problems whereby, given the system planner's choice of the intervention, a malicious adversary selects the worst-case vector of idiosyncratic shocks' variances within a given budget constraint. We provide analytical solutions to such robust optimal network intervention design problems, showing that they also have a water-filling-type structure with the ratio $\bar\omega_i$ between the node's centrality and the square root of the unitary adversarial cost now playing the role previously played by $\omega_i$. 
\end{abstract}
\begin{IEEEkeywords} System volatility, shock propagation, systemic risk, network interventions, volatility reduction.
\end{IEEEkeywords}

\section{Introduction}\label{sec:intro}
The key role of interconnections in propagating and amplifying perturbations in network systems has been widely recognized over the past two decades. It is now well understood that exogenous shocks, even if relatively small and initially localized on a  few nodes, can propagate through the network and possibly cause detrimental systemic effects.
This issue is prevalent across applications ranging from production networks \cite{acemoglu2012network}, financial systems \cite{Glasserman.Young:2015, Massai.ea:22}, social networks \cite{acemouglu2013opinion,Cisneros-Velarde:20}, sensor networks and formation control \cite{bamieh2012coherence,siami2016fundamental}, as well as infrastructure systems \cite{Como-ea:13a,Como-ea:13b}.  

A large body of literature has investigated the role of the network structure in propagating and amplifying the effects of perturbations in large-scale systems. A key finding is that, even if the exogenous shocks hitting the network are idiosyncratic, interconnections introduce significant correlations that may trigger cascading mechanisms and create systemic risk. Of crucial interest is to identify which network structures are the most vulnerable or the most resilient and, within a given network, which nodes are the most critical for the spread of such perturbations. The notion of network centrality, measuring the importance of various nodes in determining the macroscopic performance of the network system, has been identified as crucial in understanding the combined effects of shocks and network interactions \cite{jackson2010social,bloch2016centrality,siami2017centrality, Parise.Ozdaglar:23}.
Such an analysis is instrumental to the effective design of network interventions \cite{ballester2006who,candogan2012optimal,bramoulle2014strategic,zenou2015key,galeotti2020targeting,parise2021analysis,Como.ea:22,bloch2020targeting,cianfanelli2023optimal,Sezer.ea:24,Messina.ea:24}.

In this work, we focus on linear network equilibrium models, with applications including network games \cite{Galeotti.ea:10,jackson2015games,bramoulle2015games,Arditti.ea:24-sep}, production networks \cite{acemoglu2012network,carvalho2014micro,baqaee2019macroeconomic,Carvalho.Tahbaz-Salehi:19}, and opinion dynamics \cite{French:56,degroot1974reaching,friedkin1990social,Como.Fagnani:2016}. We first introduce a class of measures of aggregate system volatility ---parametrized by a symmetric positive semi-definite matrix $Q$ and referred to as $Q$-volatility--- and analyze their dependence on both the network structure and the exogenous shocks' distribution. In particular, in this paper we focus on idiosyncratic exogenous shocks, whereas our companion paper \cite{PartII} addresses the general case where the exogenous shocks hitting the network can be correlated. Our first main contribution consists in identifying a general network centrality measure ---referred to as the $Q$-target centrality--- that determines the amplification factor of the variance of the shocks affecting individual nodes to the system's $Q$-volatility (see Proposition \ref{prop:volatility-uncorrelated}).  We also show how, in certain relevant special cases, the $Q$-target centrality reduces to notions that have already been proposed in the network systems literature such as the Katz-Bonacich \cite{katz1953new, bonacich1987power}, cycle \cite{talamas2017network}, and concentration \cite{Acemoglu.ea:16} centralities. 

We then study optimal network intervention design problems whereby a system planner allocates an available budget to attenuate the effect of the exogenous shocks in some nodes and aims at minimizing the resulting $Q$-volatility of the system. Our second main contribution consists in an analytical solution of the optimal volatility reduction problem (see Theorem \ref{theo:min-uncor}). In particular, we show that this problem has a water-filling type solution: for low budget levels, the optimal intervention is concentrated on the node(s) $i$ where the square root of the product of the $Q$-target centrality times the shock's standard deviation normalized by the square root of the unitary intervention cost (denoted $\omega_i$) is maximized, whereas, as the budget increases, the optimal intervention targets more and more nodes, and eventually, for sufficiently large budget levels, it targets every node $i$ with $\omega_i>0$, with an intensity proportional  to $\omega_i$.  

Finally, we study robust versions of the optimal network intervention design problem, modeled as min-max optimization problems  whereby the system planner is interested in designing an intervention that minimizes the aggregate system volatility against a whole set of idiosyncratic shocks. 
These may be interpreted as adversarial problems whereby, given the system planner's choice of the intervention, a malicious adversary selects the worst-case vector of variances of the idiosyncratic shocks within a given budget constraint. Our third main contribution consists in providing analytical solutions to such robust optimal network intervention design problems (see Theorem \ref{theo:minmax-uncor}), showing that they also have a water-filling-type structure with the ratio $\bar\omega_i$ between the node's centrality and the square root of the unitary adversarial cost now playing the role previously played by $\omega_i$. 

The rest of the paper is organized as follows. This section concludes with a short summary of the main notational conventions to be adopted throughout the paper. Section \ref{sec:model} first presents the considered linear network equilibrium model and the notion of $Q$-volatility, then it discusses how some relevant applications can be framed into our setting. Section \ref{sec:analysis}
analyzes the dependence of the $Q$-volatility of the network equilibrium on the network structure and on the variances of the exogenous idiosyncratic shocks. Section \ref{sec:given-distribution} studies the optimal volatility reduction problem in case of exogenous idiosyncratic shocks with known variances. Section \ref{sec:min-max} studies a robust version of the optimal volatility reduction problem where the system planner is interested in designing a shock attenuation vector that minimizes the system volatility against a set of idiosyncratic shocks. Finally, Section \ref{sec:conclusion} presents some concluding remarks. 

\textbf{Notation} The sets of real, non-negative real, and positive real numbers are denoted by $\R$, $\R_+$, and  $\R_{++}$, respectively. 
For two vectors $a,b$ in $\R^n$, we denote by $a\vee b$ and $a\wedge b$ in $\R^n$ the vectors with entries $(a\vee b)_i=\max \{a_i, b_i\}$ and $(a\wedge b)_i=\min\{a_i, b_i\}$ for $i=1,2,\dots,n$, respectively. The vector of all ones is denoted $\1$, whose size may be deduced from the context. Inequalities between vectors are meant to hold true element-wise. The vector $a^k$ denotes the vector with entries $(a^k)_i = (a_i)^k$. The transpose of a matrix $A$ is denoted by $A'$. 
For a vector $a$ in $\R^n$, we denote by $[a]$ in $\R^{n\times n}$ the diagonal matrix with diagonal entries $[a]_{ii}=a_i$ for $i=1,2,\ldots,n$.
The spectral radius of a square matrix $A$ is denoted by $\rho(A)$. A square matrix $A$ is referred to as Schur stable if $\rho(A)<1$.

\section{Model description and applications}\label{sec:model}

\subsection{Network equilibrium and volatility measures}\label{sec:volatility}
We consider a linear network equilibrium model 
\be\label{equilibrium} x=Ax+u+w\,,\ee
where $x$ in $\R^n$ is a vector, $A$ in $\R^{n\times n}$ is a Schur stable matrix describing the interaction among $n$ nodes, $u$ is a nominal $n$-dimensional input vector, and $w$ is an $n$-dimensional random vector, whose entries $w_i$ will be referred to as the exogenous \emph{shocks}.
The Schur stability of $A$ guarantees that the matrix $I-A$ is invertible and that the equilibrium $x$ is uniquely determined by \eqref{equilibrium} as a function of $u$ and $w$. 
We refer to $A$ as the network \emph{interaction matrix}, 
to 
\be\label{input-output} M=(I-A)^{-1}\,,\ee
as the network \emph{input-output matrix}, and to
\be\label{equilibrium-bis} x=(I-A)^{-1}(u+w)=M(u+w)\,,\ee
as the network \emph{equilibrium}.

Throughout, we shall assume that the exogenous shocks are zero-mean and have bounded second moment, i.e., 
\be\label{noise-assumptions}\E[w_i]=0\,,\qquad \E[w_i^2]<+\infty\,,\qquad i=1,2,\ldots,n\,,\ee
and denote by
\be\label{covariance-matrix}K=\E[ww']\,,\ee
the corresponding \emph{shock covariance matrix}.
In this paper, which constitutes the first part of our work, we shall focus on the special case of \emph{idiosyncratic} shocks, i.e., on the case when $w_i$, $i=1,\ldots,n$, are independent random variables with variance $\sigma_i^2 \ge 0$, so that the shock covariance matrix is positive semi-definite and diagonal:\footnote{In fact, all our results rely on the weaker assumption that the shocks are uncorrelated, so that \eqref{noise-idiosync} holds true.}
\be\label{noise-idiosync}K=[\sigma^2]\,.\ee
In the second part of our work \cite{PartII}, we shall instead study the general case of an arbitrary covariance matrix $K$.

Our aim is first to study the effect of the network interaction matrix $A$ and of the shocks' variances on the network equilibrium $x$ through \emph{aggregate volatility} indices, and then to design optimal intervention policies that (robustly) minimize such network volatility. 
We start by introducing a general network volatility index as follows. For any symmetric positive semi-definite matrix $Q$ in $\R^{n\times n}$, we define 
\be\label{system-volatility}\Phi_Q=\E\left[(x-\E[x])'Q(x-\E[x])\right]\,,\ee
as the \emph{$Q$-volatility} of the network equilibrium \eqref{equilibrium}. 

Three key special cases  are: 
\begin{enumerate}
\item[(i)] $Q=[\delta]$, for some positive vector $\delta$ in $\R_{++}^n$, in which case the $Q$-volatility
\be\label{total-variance}\Phi_{[\delta]}=\sum_{i=1}^{n}\delta_i\E[(x_i-\E[x_i])^2]\,,\ee
coincides with the aggregate weighted variance of the equilibrium vector $x$;  
\item[(ii)] $Q=\gamma\gamma'$ for some non-negative vector 
$\gamma$ in $\R^n_+$, in which case the $Q$-volatility
\be\label{variance-average}\Phi_{\gamma\gamma'}=\E[(\gamma'x-\E[\gamma' x])^2]\,,\ee
coincides with the variance of the weighted aggregate $\gamma'x=\sum_i\gamma_ix_i$ of the equilibrium entries;  
\item[(iii)] $Q=[\delta]-\gamma\gamma'$, for some positive vector $\delta$ in $\R_{++}^n$ and some non-negative vector $\gamma$ in $\R^n_+$ such that $$\sum_{i=1}^n\frac{\gamma_i^2}{\delta_i}\le1\,;$$ 
in particular, when $\delta=\1$ and $\gamma=n^{-1/2}\1$,  
\be\label{polarization}\Phi_{I-\frac1n\1\1'}=\sum_{i=1}^{n}\E[\big((x_i-\bar x) - \E[x_i - \bar x]\big)^2]\,,\ee
coincides with the aggregate variance of the displacement from the average equilibrium value $$\bar x=\frac1n\1'x\,.$$
\end{enumerate}

In the following subsection we show how the three network volatility measures presented above naturally arise in the study of certain network systems. 

\subsection{Applications}
We present some fundamental applications of our framework.

\subsubsection{Linear network dynamics}
Consider $n$ individuals that update their scalar state $x_i(t)$ synchronously according to the linear rule
\be\label{eq:dynamics}
x_i(t+1) = \sum_{j = 1}^n A_{ij} x_j(t) + u_i + w_i, \quad i = 1,2, \cdots,n\,,
\ee
where $A$ in $\R^{n \times n}$ is a Schur stable matrix, $u$ is a reference input vector in $\R^n$ and $w$ is an $n$-dimensional zero-mean random vector with bounded second moment. It is known from \cite{Como.Fagnani:2016} that, for every initial state profile $x(0)$, the state vector $x(t)$ converges to
\be\label{FJ-limit}\lim_{t\to+\infty} x(t)= x=(I-A)^{-1}(u+w)\,.\ee
Hence, the asymptotic state vector of linear network dynamics \eqref{eq:dynamics} with Schur stable matrix $A$ can be framed into our equilibrium model \eqref{equilibrium-bis}.

Basic opinion dynamics models (e.g., the popular Friedkin-Johnsen model \cite{friedkin1990social}) are special cases of the dynamical system described in \eqref{eq:dynamics}, with the vectors $u$ and $w$ interpretable as exogenous signals representing the opinion of stubborn agents present in the modeled community or primitive beliefs playing the role of anchors for the agents' opinions.  In this context, $\Phi_I$ and $\Phi_{\frac1n\1\1'}$ represent, respectively, the aggregate variance and $n$ times the variance of the empirical mean of the asymptotic opinion vector $x$, with respect to the randomness injected by the variable $w$. In this way, they can be interpreted as resiliency measures of the opinion dynamics model against external shocks. Finally, in those scenarios where the asymptotic outcome of the unperturbed model is a consensus configuration (for instance when $Mu = c\1$ for $c$ in $\R$, which in the Friedkin-Johnsen model occurs if and only if all agents $i$ with $\sum_j A_{ij}<1$ have the same anchor opinion), the volatility $\Phi_{I-\frac1n\1\1'}$ measures the distance from the consensus. This can be seen as a measure of the polarization that the shock is producing as the volatility in this case can be rewritten as
$$\Phi_{I-\frac1n\1\1'}=\frac1{2n}\sum_{i=1}^{n}\sum_{j=1}^{n}\E[(x_i-x_j)^2]\,.$$ 

\subsubsection{Production networks}
Consider a simplified model of an economy \cite{berman1994nonnegative} consisting of $n$ interconnected productive sectors, each producing a distinct product. To produce a quantity $x_i$ of good $i$, sector $i$ needs fixed quantities $W_{ij} x_i$ of good $j$, where $W_{ij} \ge 0$ is a fixed constant. In the \emph{open Leontief model}, each product $i$ faces an external demand $d_i$. The market-clearing conditions ensuring that production satisfies exactly both intermediate and external demand are
\be\label{eq:leontief}
x_i = \sum_{j = 1}^n W_{ji} x_j + d_i\,, \quad i = 1,\cdots,n\,.
\ee
Note that, if $W$ is Schur stable and the demands $d$ are random variables with bounded second moment, \eqref{eq:leontief} is a special case of \eqref{equilibrium-bis} with $A = W'$, $u = \E[d]$, and $w = d - \E[d]$. A relevant quantity in this context is the $I$-volatility $\Phi_I$, which measures the aggregate variance of production when the demand is subject to random shocks.

The equilibrium model \eqref{equilibrium-bis} also shows up in other network economics models, such as the general equilibrium theory model based on production functions of Cobb-Douglas type \cite{acemoglu2012network,Carvalho.Tahbaz-Salehi:19}. Denoting the production vector as $y$ and its log transformation as $x=\log y$, the unique \emph{Walrasian} equilibrium is characterized by 
$$x=\log y= (I-W')^{-1}\,w+\kappa\1\,,$$ 
where $W_{ij}$ here represents the \emph{importance} of the output of firm $i$ in the production technology of industry $j$,  $w$ is a vector where $w_i$ represents the productivity log-shock on firm $i$, and where $\kappa$ is a scalar. 

A crucial quantity in this context is the logarithm of the gross domestic product (GDP) of the economy
$$\log (GDP)=\sum_{i=1}^n\gamma_ix_i-\kappa+\sum_{i=1}^n\gamma_i\log\gamma_i\,,$$
where $\gamma_i$ measures the importance of the good produced by industry $i$ for the final consumer.
The $\log (GDP)$ variance coincides with the $\gamma\gamma'$-volatility:  
$$\mathrm{Var}[\log (GDP)]=\E\left[\left(\sum_{i=1}^n\gamma_i\left(x_i-\E[x_i]\right)\right)^2\right]=\Phi_{\gamma\gamma'}\,.$$

\section{Analysis of the $Q$-volatility}\label{sec:analysis}
In this section, we study the dependence of the $Q$-volatility \eqref{system-volatility} of the network equilibrium on the network structure in the special case of idiosyncratic shocks.

We start by introducing the following notion of $Q$-target centrality for the nodes in the network.
\begin{definition} Let $A$ in $\R^{n\times n}$ be a Schur stable interaction matrix. For every symmetric positive semi-definite matrix $Q$ in $\R^{n\times n}$, the $Q$-target centrality of a node $i=1,2,\ldots,n$ is
\be\label{Q-target-centrality}\pi_i=\sqrt{(M'QM)_{ii}}=\left(\sum\nolimits_{j,k}Q_{jk}M_{ji}M_{ki}\right)^{1/2}\,,\ee
where $M$ is the input-output matrix as defined in \eqref{input-output}. 
\end{definition}\medskip

For the three special choices of symmetric positive semi-definite matrix $Q$ introduced in Section \ref{sec:volatility}, the notion of $Q$-target centrality can be related to different concepts previously introduced in the network systems literature. 
Specifically, the following interpretations hold:
\begin{enumerate}
\item[(i)] when $Q=I$, so that the $Q$-volatility coincides with the total variance of the network equilibrium \eqref{total-variance}, we have that the squared $Q$-target centrality  of a node $i$
\be\label{cycle-centrality}\pi_i^2=(M'M)_{ii}=\sum_{j=1}^{n}(M_{ji})^2\,,\ee
is proportional to its so-called cycle centrality, as introduced in \cite{talamas2017network}, for the transposed interaction matrix; 
\item[(ii)] when $Q=n^{-1}\1\1'$, so that the $Q$-volatility coincides with $n$ times the variance of the average equilibrium value \eqref{variance-average}, we have that 
\be\label{Bonacich-0}\pi_i^2=\frac1n (M'\1\1'M)_{ii}=\frac1n(M'\1)_i^2\,,\ee
so that the $Q$-target centrality of a node 
\be\label{Bonacich-centrality}\pi_i=\frac1{\sqrt{n}}\bigg|\sum_{j=1}^{n}M_{ji}\bigg|\,,\ee
is proportional to its so-called Bonacich centrality \cite{bonacich1987power} when $A$ is entrywise non-negative.
\item[(iii)] when  $Q=I-n^{-1}\1\1'$, so that the $Q$-volatility coincides with the aggregate variance of the displacement from the average equilibrium value \eqref{polarization}, we have that the squared $Q$-target centrality  of a node $i$
\be\label{concentration-centrality}\pi_i^2=\sum_{j=1}^{n}(M_{ji})^2-\frac1n\bigg(\sum_{j=1}^{n}M_{ji}\bigg)^2\,,\ee
is proportional to its so-called  concentration centrality \cite{Acemoglu.ea:16}.
\end{enumerate}

Notice that the $I$-target centrality of all nodes is strictly positive. In contrast, the $n^{-1}\1\1'$-target centrality of some nodes may be $0$ if $A$ has negative entries, and the $I-n^{-1}\1\1'$-target centrality of a node $i$ is $0$ if and only if the $i$-th column of $M$ is constant, or, equivalently, if and only if $(A\1)_i \neq 1$ and $(A\1)_j = 1$ for every $j \neq i$.
The next result relates the system $Q$-volatility to the $Q$-target centrality.
\begin{proposition} \label{prop:volatility-uncorrelated}
Let $A$ in $\R^{n\times n}$ be a Schur stable interaction matrix and let $w_i$, for $i=1,\ldots,n$, be idiosyncratic shocks with variance $\sigma_i^2\ge 0$.  
Then, for every symmetric positive semi-definite matrix $Q$ in $\R^{n\times n}$, the $Q$-volatility of the network equilibrium  \eqref{equilibrium} satisfies
$$\Phi_Q=\sum_{i=1}^{n}\pi_i^2\sigma_i^2\,,$$
where $\pi_i$ is the $Q$-target centrality as defined in \eqref{Q-target-centrality}. 
\end{proposition}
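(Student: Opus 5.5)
The plan is a direct computation based on the linear representation of the equilibrium \eqref{equilibrium-bis}. Since $x=M(u+w)$ with $u$ deterministic and $\E[w]=0$ by \eqref{noise-assumptions}, linearity of expectation gives $\E[x]=Mu$. Hence $x-\E[x]=Mw$, and the centered equilibrium is simply the image of the shock vector under the input-output matrix $M$ defined in \eqref{input-output}.

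Substituting into \eqref{system-volatility}, I will write
\[
\Phi_Q=\E\left[w'M'QMw\right]=\E\left[\tr\left(M'QM\,ww'\right)\right]=\tr\left(M'QM\,K\right)\,,
\]
using the cyclic property of the trace to move the quadratic form into a trace. Exchanging trace and expectation is legitimate because every entry of $ww'$ is integrable: $\E[w_i^2]<+\infty$ and, by Cauchy--Schwarz, $|\E[w_iw_j]|\le(\E[w_i^2]\E[w_j^2])^{1/2}$. This interchange is the only point needing any justification, and it is routine.

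Next I will use the idiosyncratic assumption \eqref{noise-idiosync}, $K=[\sigma^2]$, which only requires the shocks to be uncorrelated, as noted in the footnote. For a diagonal $K$, $\tr(BK)=\sum_i B_{ii}\sigma_i^2$ for any square matrix $B$. Taking $B=M'QM$ gives $\Phi_Q=\sum_i (M'QM)_{ii}\sigma_i^2$. By definition \eqref{Q-target-centrality}, $(M'QM)_{ii}=\pi_i^2$, which is the claim.

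It remains to check that $\pi_i$ is well defined as a real number. Since $Q$ is positive semi-definite, $(M'QM)_{ii}=e_i'M'QMe_i=(Me_i)'Q(Me_i)\ge0$, so the square root is real. I do not expect any step to be an obstacle.
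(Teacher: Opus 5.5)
Your proposal is correct and follows the paper's proof: write $x-\E[x]=Mw$, turn the quadratic form into $\tr(M'QM\,ww')$, and pull the expectation inside to get $\tr(M'QM[\sigma^2])=\sum_i (M'QM)_{ii}\sigma_i^2=\sum_i\pi_i^2\sigma_i^2$. Your extra remarks on the integrability of the entries of $ww'$ and on $(M'QM)_{ii}\ge 0$ are valid details that the paper leaves implicit.
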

\begin{proof} We have that 
$$\ba{rcl}\ds\Phi_Q
&=&\ds\E\left[w'M'QMw\right]\\[7pt] 
&=&\ds\E\left[\tr(M'QMww')\right]\\[7pt]
&=&\ds\tr\left(M'QM\E\left[ww'\right]\right)\,,\\[7pt]
&=&\tr(M'QM[\sigma^2])\\[7pt]
&=&\ds\sum_{i=1}^{n}(M'QM)_{ii}\sigma_i^2\\[7pt]
&=&\ds\sum_{i=1}^{n}\pi_i^2\sigma_i^2\,,\ea$$
where the first identity follows from \eqref{system-volatility} and  \eqref{equilibrium-bis}, the second one from the fact that $a'b=\tr(ba')$ for every $a$ and $b$ in $\R^n$, the third one from the fact that the matrices $M$ and $Q$ are deterministic, the fourth one from \eqref{noise-idiosync}, and the last one from \eqref{Q-target-centrality}. 
\qed\end{proof}

\begin{example}\label{ex:graph6}
\begin{figure}
\centering
\includegraphics{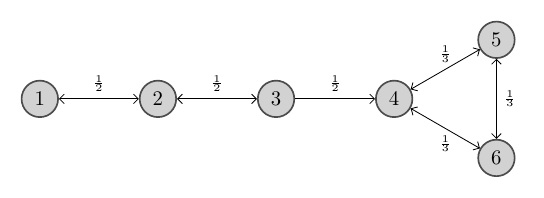}
\vspace{-0.75cm}
\caption{\label{fig:graph6}The network of Example \ref{ex:graph6}.}
\end{figure}
Consider a network with $n=6$ nodes and interaction matrix
\be\label{A6}A=\left(\ba{cccccc}0&1/2&0&0&0&0\\1/2&0&1/2&0&0&0\\0&1/2&0&1/2&0&0\\0&0&0&0&1/3&1/3\\0&0&0&1/3&0&1/3\\0&0&0&1/3&1/3&0\ea\right)\,,\ee
corresponding to the weighted graph displayed in Figure \ref{fig:graph6}. The input-output matrix is then
$$M=(I-A)^{-1}=\left(\ba{cccccc}
3/2&1&1/2&3/8&3/16&3/16\\1&2&1&3/4&3/8&3/8\\1/2&1&3/2&9/8&9/16&9/16
\\0&0&0&3/2&3/4&3/4\\0&0&0&3/4&3/2&3/4\\0&0&0&3/4&3/4&3/2\ea\right)\,.$$
For $Q=I$, we have that 
$$\pi^2=\left(\frac72,6,\frac72,\frac{171}{32},\frac{495}{128},\frac{495}{128}\right)$$
This means that the node whose shock has the highest impact on the total variance of the network equilibrium is $i=2$, followed by nodes $4$, $5$, and $6$. 
On the other hand, for $Q=n^{-1}\1\1'$, we have that $$\pi^2=\frac1{6}\left(9,16,9,\frac{441}{16},\frac{1089}{64},\frac{1089}{64}\right)\,,$$ implying that the shock that has the largest impact on the variance of the aggregate network equilibrium value is $i=4$, followed by nodes $5$ and $6$. 
Finally, for $Q=I-n^{-1}\1\1'$, we have that $$\pi^2=\left(2,\frac{10}3,2,\frac{3}{4},\frac{33}{32},\frac{33}{32}\right)\,,$$ so that the node whose shock has the highest impact on the polarization of the network equilibrium is $i=2$, followed by nodes $1$ and $3$.
\end{example}

\begin{example}\label{ex:star}
\begin{figure}
\centering
\includegraphics{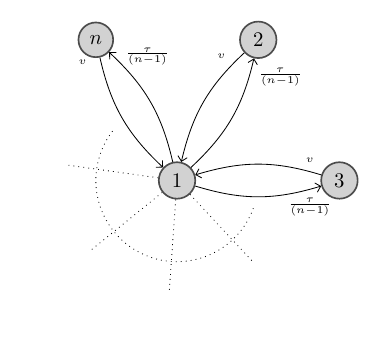}
\vspace{-0.75cm}
\caption{\label{fig:star}The star network of Example \ref{ex:star}.}
\end{figure}
Consider a star network with interaction matrix 
\be\label{Astar}\!\!\!\!A\!=\!\left(\ba{ccccc}0&\tau/(n-1)&\tau/(n-1)&\ldots&\tau/(n-1)\\ 
\ups&0&0&\ldots&0\\
\ups&0&0&\ldots&0\\
\vdots&\vdots&\vdots&&\vdots\\\ 
\ups&0&0&\ldots&0\ea\right)\ee
where $\tau,\ups\in(0,1)$ (see Figure \ref{fig:star}).
Here, node $1$ represents a hub and is connected to all the other nodes, representing spokes, each of which is connected only to the hub.
Setting $\theta=1-\ups\tau$, the input-output matrix is given by 
$$M=\frac{1}{\theta}
\left(\ba{ccccc}1&\frac{\tau}{n-1}&\frac{\tau}{n-1}&\ldots&\frac{\tau}{n-1}\\ 
\ups&\theta+\frac{\ups\tau}{n-1}&\frac{\ups\tau}{n-1}&\ldots&\frac{\ups\tau}{n-1}\\
\ups&\frac{\ups\tau}{n-1}&\theta+\frac{\ups\tau}{n-1}&\ldots&\frac{\ups\tau}{n-1}\\
\vdots&\vdots&\vdots&&\vdots\\\
\ups&\frac{\ups\tau}{n-1}&\frac{\ups\tau}{n-1}&\ldots&\theta+\frac{\ups\tau}{n-1}\ea\right).$$
The $Q$-target centrality vector is then given by: 
\begin{enumerate}
\item[(i)]
for $Q=I$, $$\hspace{-.9cm}\pi_1=\frac{\sqrt{1+(n-1)\ups^2}}{\theta},\quad \pi_{i\ge2}=\frac{\sqrt{1+\theta^2(n-2)+\frac{\tau^2}{n-1}}}{\theta\sqrt{n-1}};$$
\item[(ii)] for $Q=n^{-1}\1\1'$, $$\hspace{-.9cm}\pi_1=\frac{{1+(n-1)\ups}}{\theta\sqrt{n}},\quad \pi_{i\ge2}=\frac{{1+\frac{\tau}{n-1}}}{\theta\sqrt{n}};$$
\item[(iii)] for $Q=I-n^{-1}\1\1'$, $$\hspace{-.9cm}\pi_1=\frac{\sqrt{n-1}(1-\ups)}{\theta\sqrt{n}},\quad \pi_{i\ge2}=\frac{\sqrt{\theta^2(n-2)+\frac{(1-\tau)^2}{n}}}{\theta\sqrt{n-1}}.$$
\end{enumerate}
Now assume, for simplicity of exposition, that $\tau=\ups$.
It is worth observing that, while both the $I$-target and $n^{-1}\1\1'$-target centralities of the hub are larger than the corresponding  ones of the spokes, the  converse holds true for the $(I-n^{-1}\1\1')$-target centrality that is larger for the spokes than for the hub.
\end{example}

\section{Optimal volatility reduction}\label{sec:given-distribution}
In this section, we study the problem faced by a system planner who has the ability to attenuate the effect of some exogenous shocks and aims at designing interventions that optimally reduce the $Q$-volatility of the network.  

Formally, let $z\ge\1$ be an $n$-dimensional shock attenuation vector, to be referred to as an intervention, and let 
\be\label{system} x=Ax+u+[z]^{-1}w\,,\ee
be the network equilibrium model upon intervention.  
We study network intervention problems in which a system planner designs the intervention $z\ge\1$ 
with the objective of minimizing the system volatility \eqref{system-volatility} under a budget constraint 
\be\label{budget-constraint} \sum_{i=1}^{n}\alpha_i^2z_i^2\le\beta\,,\ee 
where $\alpha_i>0$ are cost weights and $\beta$ 
is the available budget. In order to formalize the optimization problem, we define the set of feasible interventions under budget constraint $\beta$ as
\be\label{eq:vincolo}
\mc Z_\beta = \{z \in \R^n_+ \ | \ z \ge \1, \ \sum_{i=1}^n \alpha_i^2 z_i^2 \le \beta\}\,.
\ee
Clearly, a necessary and sufficient condition for this problem to be feasible is that 
\be\label{feasible-budget-threshold}\beta\ge\beta_{\min}\,,\qquad \beta_{\min}=\sum_{i=1}^{n}\alpha_i^2\,.\ee

Given a Schur-stable interaction matrix $A$ and a symmetric positive semi-definite matrix $Q$, we shall denote by $\Phi_Q(z)$ the $Q$-volatility of the network equilibrium model upon intervention \eqref{system}  and  study the optimal volatility reduction problem
\be\label{prob:min}\min_{\ds z \in \mc Z_\beta}\Phi_Q(z)\,.\ee
Notice that, since the intervention consists in a rescaling of each shock component $w_i$ by the quantity $z_i\ge1$, Proposition \ref{prop:volatility-uncorrelated} implies that the $Q$-volatility of the network equilibrium model upon intervention is given by
\be\label{PhiQ-intervention-uncorrelated}\Phi_Q(z)=\sum_{i=1}^{n}\pi_i^2\frac{\sigma_i^2}{z_i^2}\,,\ee
where $\pi_i$ is the $Q$-target centrality of a node $i=1,\ldots,n$, as defined in \eqref{Q-target-centrality}. 
It follows that the optimal $Q$-volatility reduction problem \eqref{prob:min} can be rewritten in the standard form
\be\label{optim}\min_{\substack{z\ge\1:\\\,g(z)\le 0}}f(z)\,,\ee
with $$f(z)=\Phi_Q(z)=\sum_{i=1}^{n}\frac{\sigma_i^2\pi_i^2}{z_i^2}\,,\qquad g(z)=\sum_{i=1}^{n}\alpha_i^2z_i^2-\beta\,.$$
Optimization problems in the form \eqref{optim} can be solved by means of the following standard result. 
\begin{lemma}\label{lemma:simple}
For two continuous functions $f:\R_{++}^n\to\R$ and $g:\R_{++}^n\to\R$, consider the Lagrangian $L(z,\lambda)=f(z)+\lambda g(z)\,.$
Let $z^*\ge\1$ and $\lambda\ge0$ be such that 
\be\label{g=0}g(z^*)=0\,,\ee
and
\be\label{L-opt}f(z^*)=L(z^*,\lambda)=\min_{z\ge\1}L(z,\lambda)\,.\ee
Then, 
\be\label{claim-lemma}f(z^*)=\min_{\substack{z\ge\1:\\\,g(z)\le 0}}f(z)\,.\ee
\end{lemma}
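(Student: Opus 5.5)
This is a weak-duality argument; no convexity or differentiability is needed. We only compare values of $f$ and $L$ at feasible points.

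First, $z^*$ is feasible for the constrained problem in \eqref{claim-lemma}, since $z^*\ge\1$ by assumption and $g(z^*)=0\le 0$ by \eqref{g=0}. Consequently
$$\min_{\substack{z\ge\1:\\\,g(z)\le 0}}f(z)\le f(z^*),$$
understood as an infimum until attainment is established. Note also that $L(z^*,\lambda)=f(z^*)+\lambda g(z^*)=f(z^*)$ by \eqref{g=0}, which is consistent with the first equality in \eqref{L-opt}.

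For the reverse inequality, take any $z\ge\1$ with $g(z)\le 0$. Since $\lambda\ge 0$, we have $\lambda g(z)\le 0$, and therefore
$$f(z)\ge f(z)+\lambda g(z)=L(z,\lambda)\ge \min_{y\ge\1}L(y,\lambda)=f(z^*),$$
where the last equality is \eqref{L-opt}. So $f(z^*)$ is a lower bound for $f$ over the feasible set. Because this bound is attained at the feasible point $z^*$, the infimum is a minimum equal to $f(z^*)$, which is exactly \eqref{claim-lemma}.

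There is no real obstacle. The only points to check are that the sign condition $\lambda\ge0$ is what turns the constraint $g(z)\le0$ into $\lambda g(z)\le 0$, and that complementary slackness \eqref{g=0} is what gives $L(z^*,\lambda)=f(z^*)$. Continuity of $f$ and $g$ is not used in the argument; it only ensures the objects are well defined, since the minimum of $L$ is assumed to exist in the hypothesis.
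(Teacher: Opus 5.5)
Your proposal is correct and uses the same chain of inequalities as the paper: $f(z^*)=L(z^*,\lambda)\le L(z,\lambda)=f(z)+\lambda g(z)\le f(z)$ for every $z$ with $z\ge\1$ and $g(z)\le 0$. You also state explicitly that $z^*$ is itself feasible, so the minimum is attained; the paper leaves this point implicit.
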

\begin{proof}
It follows from \eqref{g=0} and \eqref{L-opt} that, for every $z\ge\1$ such that $g(z)\le0$, 
$$f(z^*)=L(z^*,\lambda)\le L(z,\lambda)=f(z)+\lambda g(z)\le f(z)\,,$$
thus proving \eqref{claim-lemma}. 
\qed\end{proof}

Now, define the quantities 
\be\label{omegai-def}\omega_i=\sqrt{\frac{\pi_i\sigma_i}{\alpha_i}}\,,\qquad i=1,\ldots,n\,,\ee
and let
$$\omega_{\min}=\min_{i=1,\ldots,n}\omega_i\,,\qquad \omega_{\max}=\max_{i=1,\ldots,n}\omega_i\,.$$
We assume that $\omega_{\max}>0$, otherwise the problem is trivial. 
Notice that the function $\xi:(0,\omega_{\max}]\to[\beta_{\min},+\infty)$ defined by 
\be\label{csi}\xi(t)=\sum_{i=1}^{n}\alpha_i^2\left(1\vee\frac{\omega_i^2}{t^2}\right)\,,\qquad 0<t\le\omega_{\max}\,.\ee
is strictly decreasing with  
\be\label{extremes}\lim_{t\downarrow0}\xi(t)=+\infty\,,\qquad \xi(\omega_{\max})=\beta_{\min}\,.\ee
Hence,  its inverse function
\be\label{zeta-def}\zeta:[\beta_{\min},+\infty)\to(0,\omega_{\max}]\,,\qquad \zeta(\beta)=\xi^{-1}(\beta)\,,\ee
is well defined and strictly decreasing itself.
We can then present the first main result of this paper, explicitly characterizing the optimal volatility reduction intervention. 

\begin{theorem}\label{theo:min-uncor}
Let $A$ in $\R^{n\times n}$ be a Schur stable interaction matrix and let $w_i$, for $i=1,\ldots,n$, be idiosyncratic shocks with variance $\sigma_i^2 \ge 0$.  
Then, for every  symmetric positive semi-definite matrix $Q$ in $\R^{n\times n}$, positive vector $\alpha$  in $\R^n$, and budget $\beta$ satisfying \eqref{feasible-budget-threshold}, the optimal solution $z(\beta)$ in $\R^n$ of the $Q$-volatility reduction problem \eqref{prob:min} is unique and defined by
\be\label{optimal-1} z_i(\beta)=1\vee{\frac{\omega_i}{\zeta(\beta)}}\,,\qquad i=1,2,\ldots,n\,,\ee
where $\omega_i$ and $\zeta(\beta)$ are defined in \eqref{omegai-def} and \eqref{zeta-def}, respectively. 
\end{theorem}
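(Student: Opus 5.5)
The plan is to apply Lemma \ref{lemma:simple} with $f(z)=\sum_i \sigma_i^2\pi_i^2/z_i^2$ and $g(z)=\sum_i\alpha_i^2z_i^2-\beta$. The multiplier will be $\lambda=\zeta(\beta)^4$, and uniqueness will come from strict convexity in the variables $y_i=z_i^2$.

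\emph{Optimality.} Set $t=\zeta(\beta)\in(0,\omega_{\max}]$ and define $z^*_i=1\vee(\omega_i/t)$ as in \eqref{optimal-1}. By the definition \eqref{zeta-def}, $\xi(t)=\beta$, and this is exactly $g(z^*)=0$, so \eqref{g=0} holds. With $\lambda=t^4\ge0$ the Lagrangian is separable:
$$L(z,\lambda)=\sum_i h_i(z_i)-\lambda\beta,\qquad h_i(s)=\frac{\sigma_i^2\pi_i^2}{s^2}+t^4\alpha_i^2s^2 .$$
So it suffices to minimize each $h_i$ over $s\ge1$.
\begin{itemize}
\item If $\pi_i\sigma_i>0$, then $h_i$ is strictly convex on $\R_{++}$. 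Its unconstrained minimizer solves $s^4=\sigma_i^2\pi_i^2/(t^4\alpha_i^2)$, i.e.\ $s=\omega_i/t$ by \eqref{omegai-def}. By convexity, the minimizer over $s\ge1$ is $1\vee(\omega_i/t)=z_i^*$.
\item If $\omega_i=0$, then $h_i(s)=t^4\alpha_i^2s^2$ is increasing, so the minimizer is $s=1=z_i^*$.
\end{itemize}
Hence $z^*$ minimizes $L(\cdot,\lambda)$ over $z\ge\1$. Since $g(z^*)=0$, we also have $L(z^*,\lambda)=f(z^*)$, which gives \eqref{L-opt}. Lemma \ref{lemma:simple} then shows that $z^*$ is optimal for \eqref{prob:min}.

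\emph{Uniqueness.} This is the only delicate point, because $f$ need not be strictly convex in $z$ on coordinates with $\omega_i=0$. I would argue directly using the same $\lambda=t^4>0$. Let $\tilde z$ be any optimal point. Then $f(\tilde z)=f(z^*)$, and
$$f(z^*)=L(z^*,\lambda)\le L(\tilde z,\lambda)=f(\tilde z)+\lambda g(\tilde z)\le f(\tilde z).$$
So all these inequalities are equalities, and $\tilde z$ is also a minimizer of $L(\cdot,\lambda)$ over $z\ge\1$. Each $h_i$ has a unique minimizer over $s\ge1$:
\begin{itemize}
\item it is strictly convex when $\omega_i>0$;
\item it is strictly increasing on $[1,\infty)$ when $\omega_i=0$, since $t>0$ and $\alpha_i>0$.
\end{itemize}
Therefore $\tilde z_i=z_i^*$ for every $i$. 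Here the positivity of $\lambda$, which follows from $t>0$, is essential.

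\emph{Preliminary checks.} The claimed properties of $\xi$ must hold so that $\zeta$ is well defined. Each term $\alpha_i^2(1\vee\omega_i^2/t^2)$ is continuous and nonincreasing in $t$. The term with $\omega_i=\omega_{\max}>0$ is strictly decreasing on $(0,\omega_{\max}]$ and tends to $+\infty$ as $t\downarrow0$. At $t=\omega_{\max}$ every term equals $\alpha_i^2$, which gives \eqref{extremes}. By continuity and strict monotonicity, $\zeta(\beta)$ then exists in $(0,\omega_{\max}]$ for every $\beta\ge\beta_{\min}$. I expect no real obstacle beyond handling the degenerate coordinates with $\omega_i=0$ in the uniqueness argument.
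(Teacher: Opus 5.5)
Your proof is correct. The optimality half is the paper's argument: the same multiplier $\lambda=\zeta^4(\beta)$, the same tightness $g(z^*)=\xi(\zeta(\beta))-\beta=0$, and the same appeal to Lemma~\ref{lemma:simple}. You also make explicit, via convexity of each $h_i$, why the sign conditions on $\partial L/\partial z_i$ give a global minimizer of $L(\cdot,\lambda)$ on $z\ge\1$; the paper leaves this step implicit.

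The uniqueness half takes a different route. The paper argues purely on the primal side. For coordinates with $\omega_i>0$, it combines strict convexity of $\alpha_i^2\omega_i^4z_i^{-2}$ with convexity of $\mc Z_\beta$: the midpoint of two distinct optima would have strictly smaller cost. It then handles the coordinates with $\omega_i=0$ through the budget, since $\sum_{i:\omega_i=0}\alpha_i^2(\tilde z_i^2-1)=g(\tilde z)-g(z^*)\le0$ and all summands are nonnegative. You instead reuse the dual certificate. The chain $f(z^*)=L(z^*,\lambda)\le L(\tilde z,\lambda)\le f(\tilde z)$ forces every optimum to minimize the separable Lagrangian. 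Because $\lambda=t^4>0$, every $h_i$ then has a unique minimizer on $[1,\infty)$, whether or not $\omega_i>0$.

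What each buys: your version treats degenerate and non-degenerate coordinates in one stroke, as an instance of the standard ``unique Lagrangian minimizer implies unique primal optimum'' principle. The paper's version needs no multiplier in the uniqueness step, only that the budget is tight at $z^*$, which is what your $\lambda>0$ encodes via complementary slackness.

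A minor point: your opening plan announces strict convexity in $y_i=z_i^2$, but the argument you actually give works directly in $z_i$. This is harmless.
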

\begin{proof}
Clearly, $z(\beta)\ge\1$. 
Moreover, 
\be\label{constraint=} \ba{rcl}g(z(\beta))&=&\ds\sum_{i=1}^{n}\alpha_i^2z_i^2(\beta)-\beta\\[10pt]
&=&\ds\sum_{i=1}^{n}\alpha_i^2\left(1\vee\frac{\omega_i^2}{\zeta^2(\beta)}\right)-\beta\\[10pt]
&=&\ds\xi\left(\zeta(\beta)\right)-\beta\\[10pt]&=&0\,.\ea\ee
Now, consider the Lagrangian 
$$L(z,\lambda)
=\sum_{i=1}^{n}\frac{\sigma_i^2\pi_i^2}{z_i^2}+\lambda\left(\sum_{i=1}^{n}\alpha_i^2z_i^2-\beta\right)\,,$$
and compute its partial derivatives 
\be\label{derivative-Lagrangian} \ba{rcl}\ds \!\! \frac{\partial L}{\partial z_i}(z(\beta),\zeta^4(\beta))
&=&\ds -2\frac{\pi_i^2\sigma_i^2}{z_i^3(\beta)}+2\zeta^4(\beta)\alpha_i^2z_i(\beta)\\[10pt]
&=&\ds\frac{2\alpha_i^2}{z_i^3(\beta)}\left(\zeta^4(\beta)z_i^4(\beta)-\omega_i^4\right)\,,\ea\ee
for $i=1,2,\ldots,n$. 
Then, notice that, for every $i$ such that $z_i(\beta)=1$, \eqref{optimal-1} implies that $\zeta(\beta)\ge\omega_i$, so that, by \eqref{derivative-Lagrangian} we get
$$\ba{rcl}\ds\frac{\partial L}{\partial z_i}(z(\beta),\zeta^4(\beta))
&=&\ds\frac{2\alpha_i^2}{z_i^3(\beta)}\left(\zeta^4(\beta)z_i^4(\beta)-\omega_i^4\right)\\[10pt]
&=&\ds2\alpha_i^2\left(\zeta^4(\beta)-\omega_i^4\right)\\[8pt]
&\ge&0\,.\ea$$
On the other hand, for every $i$ such that $z_i(\beta)=\omega_i/\zeta(\beta)$, \eqref{optimal-1} implies that  $$\frac{\partial L}{\partial z_i}(z(\beta),\zeta^4(\beta))=\frac{2\alpha_i^2}{z_i^3(\beta)}\left(\zeta^4(\beta)z_i^4(\beta)-\omega_i^4\right)=0\,.$$
Hence, 
\be\label{Lagrangian-min} f(z(\beta))=L(z(\beta),\zeta^4(\beta))=\min_{z\ge\1}L(z,\zeta^4(\beta))\,.\ee
We now apply Lemma \ref{lemma:simple} and get that $z(\beta)$ defined in \eqref{optimal-1} is a solution of \eqref{prob:min}.

To prove uniqueness, let $z^*(\beta)$ be any optimal solution of~\eqref{prob:min} and let
$\mathcal P=\{i:\omega_i>0\}$. Since $\pi_i^2\sigma_i^2=\alpha_i^2\omega_i^4$, we can write
$f(z)=\sum_{i\in\mathcal P}\alpha_i^2\omega_i^4 z_i^{-2}$, where each term is strictly convex
in $z_i>0$. If $z_i^*(\beta)\ne z_i(\beta)$ for some $i\in\mathcal P$, then the point
$\tilde z(\beta)=\frac12\big(z^*(\beta)+z(\beta)\big)$, which belongs to the convex set
$\mathcal Z_\beta$, would satisfy
$f(\tilde z(\beta))<\frac12 f(z^*(\beta))+\frac12 f(z(\beta))=f(z(\beta))$, contradicting the optimality
of $z(\beta)$. Hence, $z_i^*(\beta)=z_i(\beta)$ for every $i\in\mathcal P$. On the other hand,
$z_i(\beta)=1$ for every $i\notin\mathcal P$, so that
$$\begin{aligned}
\sum_{i\notin\mathcal P}\alpha_i^2\big(z_i^{*2}(\beta)-1\big)
& =\sum_{i=1}^n\alpha_i^2\big(z_i^{*2}(\beta)-z_i^2(\beta)\big) \\
&=g(z^*)-g(z(\beta))\le0\,.
\end{aligned}$$
Since $z^*\ge\1$ and $\alpha_i>0$, every term of the sum on the left-hand side is
nonnegative, so that $z_i^*(\beta)=1=z_i(\beta)$ for every $i\notin\mathcal P$. Therefore,
$z^*(\beta)=z(\beta)$.
\qed\end{proof}

Theorem \ref{theo:min-uncor} provides an explicit formula for the optimal volatility reduction problem in case of idiosyncratic shocks.
For every budget level $\beta\ge\beta_{\min}$, we may define the set
\be\label{active}\mc A(\beta)=\l\{i\!=\!1,\!\dots\!,n\!: \omega_i>\zeta(\beta)\r\}\,,\ee
of \emph{active} nodes on which the optimal intervention concentrates.
Then, equation \eqref{optimal-1} may be rewritten as $$z_i(\beta)=\left\{\ba{lcl}\ds{\omega_i}/{\zeta(\beta)}&\se&i\in\mc A(\beta)\\[5pt]1&\se&i\notin\mc A(\beta)\,.\ea\right.$$
The key quantity $\omega_i=\sqrt{\pi_i\sigma_i{/\alpha_i}}$ that determines whether a node $i$ is active ---and, if so, the magnitude of the optimal intervention $z_i(\beta)$ on it--- is the square root of the product of $i$'s $Q$-target centrality $\pi_i$, of the standard deviation $\sigma_i$ of the exogenous shock $w_i$, and of the inverse of the square root $\alpha_i$ of the cost of a unitary intervention. 
There are two extreme scenarios. On the one hand, if $\beta_{\min} < \beta\le\xi(\max\{\omega_i|\omega_i<\omega_{\max}\})$, then only nodes $i$ with the highest value of $\omega_i$ are active, i.e.,  $\mc A(\beta)=\{i:\omega_i=\omega_{\max}\}$;  
on the other hand, if $\omega_{\min} > 0$ and $\beta > \xi(\omega_{\min})$, then all nodes are active, i.e., $\mc A(\beta)=\{1,\ldots,n\}$. Notice that, if $\pi_i = 0$ or $\sigma_i = 0$ for a node $i$, then $\omega_i = 0$ and $z_i(\beta) = 1$ for every budget $\beta$, namely, node $i$ will never be active.

\begin{figure}
\centering
\includegraphics[scale=0.675]{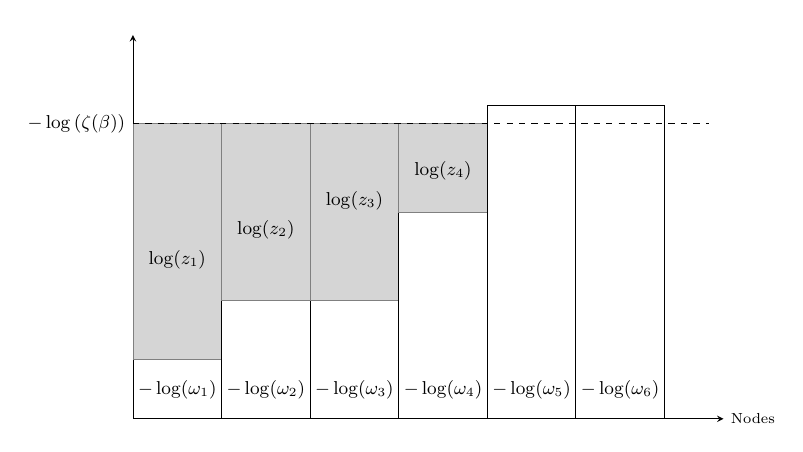}
\caption{Water-filling illustration of the optimal volatility reduction intervention in Example~\ref{ex:graph6}, with nodes ordered so that $\omega_1>\omega_2>\dots>\omega_6$, and $\omega_5<\zeta(\beta)<\omega_4$.}
\label{fig:water-filling}
\end{figure}

For arbitrary values of budget $\beta$, the solution \eqref{optimal-1} of the optimal volatility reduction problem is illustrated graphically in Figure \ref{fig:water-filling} for the network of Example  \ref{ex:graph6}.
The vertical levels indicate the opposite of the logarithm of the parameter $\omega_i$ defined in \eqref{omegai-def}.
As the available budget $\beta$ is increased from its minimum value $\beta_{\min}$, the optimal intervention starts by allotting it to the nodes $i$ with the highest value of $\omega_i$ (equivalently, the lowest value of $-\log\omega_i$). As $\beta$ is further increased, some of the available budget is put on nodes with lower values of $\omega_i$.
The process by which the available budget is allocated to the different nodes resembles the way in which an amount of water distributes itself on an uneven floor composed of $n$ parts of unitary area and height $-\log\omega_i$, for $i=1,\ldots,n$. Having $\omega_i = 0$ is equivalent to a floor of infinite height.
This process is often referred to as \emph{water-filling}, hence we may refer to formula \eqref{optimal-1} as a water-filling-type solution\footnote{Water-filling-type solutions are common in certain problems of Information Theory, such as power allocation in parallel Gaussian transmission channels \cite[Ch.~9.4]{Cover.Thomas:06}.} of the optimal volatility reduction problem \eqref{prob:min}.

Theorem \ref{theo:min-uncor} also allows us to compute the minimal $Q$-volatility $\Phi_{Q}\l(z(\beta)\r)$ and study its dependence on budget $\beta$.
In particular, we have the following result. 
\begin{corollary}\label{coro:min-uncor-PhiQ}
Let $A$ in $\R^{n\times n}$ be a Schur stable interaction matrix and let $w_i$, for $i=1,\ldots,n$, be idiosyncratic shocks with variance $\sigma_i^2 \ge 0$.  
Then, for every  symmetric positive semi-definite matrix $Q$ in $\R^{n\times n}$, positive vector $\alpha$  in $\R^n$, and budget $\beta > \beta_{\min}$, the minimal $Q$-volatility achievable in the optimization problem \eqref{prob:min} is
\be\label{optimal-1-PhiQ-active}\Phi_{Q}\l(z(\beta)\r)=\frac{\Big(\sum\limits_{i\in\mc A(\beta)}\alpha_i\pi_i\sigma_i\Big)^2}{\beta-\sum\limits_{i\notin\mc A(\beta)}\alpha_i^2}+\sum_{i\notin\mc A(\beta)}\pi_i^2\sigma_i^2\,,\ee
where $\pi_i$ is the $Q$-target centrality of node $i$, as defined in \eqref{Q-target-centrality}, and $\mc A(\beta)$ is the set of active nodes defined in \eqref{active}. 
\end{corollary}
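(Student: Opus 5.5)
The plan is to substitute the explicit optimizer of Theorem~\ref{theo:min-uncor} into the expression \eqref{PhiQ-intervention-uncorrelated} for $\Phi_Q(z)$. Then we eliminate $\zeta(\beta)$ using the budget identity \eqref{constraint=}. First, we split the sum in \eqref{PhiQ-intervention-uncorrelated} according to the active set $\mc A(\beta)$ defined in \eqref{active}. For $i\notin\mc A(\beta)$ we have $z_i(\beta)=1$, which contributes exactly $\sum_{i\notin\mc A(\beta)}\pi_i^2\sigma_i^2$, the second term in \eqref{optimal-1-PhiQ-active}. For $i\in\mc A(\beta)$ we have $z_i(\beta)=\omega_i/\zeta(\beta)$. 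Using $\pi_i^2\sigma_i^2=\alpha_i^2\omega_i^4$, from \eqref{omegai-def}, each active term becomes
$$\frac{\pi_i^2\sigma_i^2\zeta^2(\beta)}{\omega_i^2}=\alpha_i^2\omega_i^2\zeta^2(\beta)=\alpha_i\pi_i\sigma_i\,\zeta^2(\beta)\,.$$
The active contribution is therefore $\zeta^2(\beta)\sum_{i\in\mc A(\beta)}\alpha_i\pi_i\sigma_i$.

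Next, we express $\zeta^2(\beta)$ through the budget. By \eqref{constraint=}, $\xi(\zeta(\beta))=\beta$. Writing out \eqref{csi} with the same split gives
$$\beta=\sum_{i\notin\mc A(\beta)}\alpha_i^2+\frac{1}{\zeta^2(\beta)}\sum_{i\in\mc A(\beta)}\alpha_i^2\omega_i^2\,.$$
For inactive nodes we use $\omega_i\le\zeta(\beta)$, so that $1\vee\omega_i^2/\zeta^2=1$. Since $\alpha_i^2\omega_i^2=\alpha_i\pi_i\sigma_i$, solving for $\zeta^2(\beta)$ yields
$$\zeta^2(\beta)=\frac{\sum_{i\in\mc A(\beta)}\alpha_i\pi_i\sigma_i}{\beta-\sum_{i\notin\mc A(\beta)}\alpha_i^2}\,.$$
Substituting this into the active contribution gives the squared-sum first term of \eqref{optimal-1-PhiQ-active}.

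The main point needing care is that the denominator is nonzero, i.e.\ that $\mc A(\beta)\neq\emptyset$. This is where the strict inequality $\beta>\beta_{\min}$ is used. Since $\xi$ is strictly decreasing with $\xi(\omega_{\max})=\beta_{\min}$, we get $\zeta(\beta)<\omega_{\max}$, so at least one node is active. The budget identity then shows that $\beta-\sum_{i\notin\mc A(\beta)}\alpha_i^2=\zeta^{-2}(\beta)\sum_{i\in\mc A(\beta)}\alpha_i^2\omega_i^2>0$. Everything else is routine algebra.
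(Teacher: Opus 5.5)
Your proposal is correct and follows the paper's proof essentially step by step. Like the paper, you substitute the water-filling optimizer into $\Phi_Q(z)$, split the sum over $\mc A(\beta)$ using $\pi_i^2\sigma_i^2=\alpha_i^2\omega_i^4$, and eliminate $\zeta^2(\beta)$ via $\xi(\zeta(\beta))=\beta$. You also check explicitly that $\beta>\beta_{\min}$ forces $\mc A(\beta)\neq\emptyset$, so the denominator is positive; this is a welcome detail the paper leaves implicit.
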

\begin{proof}
By substituting formula \eqref{optimal-1} into \eqref{PhiQ-intervention-uncorrelated} and using \eqref{omegai-def}, we get 
\be\label{phi-special2}\ba{rcl}\ds\Phi_Q(z(\beta))
&=&\ds\sum_{i=1}^n\frac{\pi_i^2\sigma_i^2}{1\vee\frac{\omega_i^2}{\zeta^2(\beta)}}\\[10pt]
&=&\ds\sum_{i=1}^n\pi_i^2\sigma_i^2\l(1\wedge\frac{\zeta^2(\beta)}{\omega_i^2}\r)\\[10pt]
&=&\ds\sum_{i \in \mc A(\beta)} \frac{\pi_i^2 \sigma_i^2 \zeta^2(\beta)}{\omega_i^2} + \sum_{i \notin \mc A(\beta)} \pi_i^2 \sigma_i^2\\[10pt]
&=&\ds\zeta^2(\beta)\sum\limits_{i\in\mc A(\beta)}\alpha_i\pi_i\sigma_i+\sum\limits_{i\not\in\mc A(\beta)}\pi_i^2\sigma_i^2\,.\ea
\ee
Notice now that for $t=\zeta(\beta)$, we have that $$\beta=\xi(t)=\sum\limits_i\alpha_i^2\l(1\vee\frac{\omega_i^2}{t^2}\r)=\frac{1}{t^2}\sum\limits_{i\in\mc A(\beta)}\alpha_i\pi_i\sigma_i+\sum\limits_{i\not\in\mc A(\beta)}\alpha_i^2\,,$$
so that $$\zeta^2(\beta)=\frac{\sum\limits_{i\in\mc A(\beta)}\alpha_i\pi_i\sigma_i}{\beta- \sum\limits_{i\not\in\mc A(\beta)}\alpha_i^2}\,.$$
Substituting the above into equation \eqref{phi-special2}, we obtain \eqref{optimal-1-PhiQ-active}.
\qed\end{proof}

Corollary \ref{coro:min-uncor-PhiQ} characterizes the minimum achievable $Q$-volatility as a function of the budget level $\beta$, of the $Q$-target centrality vector $\pi$, of the shocks' variances $\sigma_i^2$, and of the cost of a unitary intervention $\alpha_i^2$ in the nodes $i=1,\ldots,n$. 
Notice that if $\omega_{\min} > 0$ and $\beta\ge\xi(\omega_{\min})$, then \eqref{optimal-1-PhiQ-active} reduces to
\be\label{optimal-1-PhiQ-HB}\Phi_Q(z(\beta))=\frac{1}{\beta}\l(\sum_{i=1}^n\alpha_i\pi_i\sigma_i\r)^2\,.\ee
Thus, unless some nodes have $\omega_i = 0$, the optimal $Q$-volatility is inversely proportional to the budget level $\beta$ whenever $\beta$ is sufficiently large that all nodes are active in the optimal intervention. Instead, for lower budget levels, where the optimal intervention is concentrated on a proper subset of active nodes $\mc A(\beta)$, it is only the global effect of the exogenous shocks on the nodes in $\mc A(\beta)$ that gets reduced by the budget level $\beta$, while the contribution of the other nodes on the minimal $Q$-volatility remains constant.

\addtocounter{example}{-2}	
\begin{example}[continued]
Consider the network with $n=6$ nodes and interaction matrix \eqref{A6}. 
Assume that $\alpha_i^2=1$ and $\sigma_i^2=1$, for $i=1,2,\dots,n$. Then, for every $i$, $\omega_i=\sqrt{\pi_i}$. Denote by $z_Q(\beta)\in\R^n$ the optimal intervention with respect to $Q$-volatility and set $J=n^{-1}\1\1'$.
Figure \ref{fig:graph6-z} depicts the behavior of $z_Q(\beta)$ when $Q=I$ (solid line), $Q=J$ (dashed line), and $Q=I-J$ (dotted line).
\begin{figure}
\centering
\includegraphics{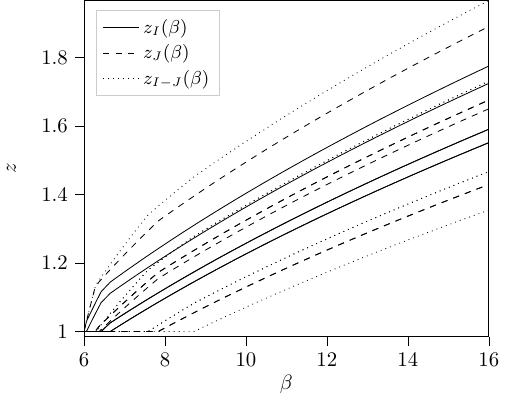}
\caption{\label{fig:graph6-z} Behavior of $z(\beta)$ for network of Example \ref{ex:graph6} when $Q$ is equal to $I,J=n^{-1}\1\1'$, and $I-J$. 
}
\end{figure}
\end{example}

\begin{example}[continued]
\label{ex:star-1}
Consider a star network with interaction matrix \eqref{Astar}.
Assume that $\alpha_i^2=1$ and $\sigma_i^2=1$, for $i=1,2,\dots,n$ and that $\tau=\ups\in(0,1)$. We set $J=n^{-1}\1\1'$ and $\theta=1-\tau^2$.
We now compute the optimal $Q$-volatility using Corollary \ref{coro:min-uncor-PhiQ}.

(i) When $Q=I$, $\pi_1\ge\pi_{i\ge2}$.
The hub is targeted first and, subsequently, for large enough $\beta$, the intervention becomes active also on all spokes. 
The transition occurs at $$\beta_I=n-1+\sqrt{\frac{(n-1)(1+(n-1)\tau^2)}{1+\theta^2(n-2)+\frac{\tau^2}{n-1}}}\,.$$
Then, the resulting optimal $Q$-volatility is $$\Phi_Q(z(\beta))=\l\{\ba{ll} n-2+\frac{n\tau^2+\theta}{\theta^2(\beta-(n-1))}+\frac{n-\theta}{\theta^2(n-1)} & \mbox{ if } \beta<\beta_I\\&\\ \frac{\l(\sqrt{n\tau^2+\theta}+\sqrt{n-\theta+\theta^2(n-1)(n-2)}\r)^2}{\theta^2\beta} & \mbox{ if } \beta\ge\beta_I.\ea\r.$$

(ii) When $Q=J$, $\pi_1\ge\pi_{i\ge2}$.
The hub is the first node to be targeted and, then, for large enough $\beta$, the intervention becomes active also on all spokes.
The transition occurs at $$\beta_{J}=n-1+\frac{1+(n-1)\tau}{1+\frac{\tau}{n-1}}\,.$$
Then, the resulting optimal $Q$-volatility is
$$\Phi_Q(z(\beta))=\l\{\ba{ll} \frac{(1+(n-1)\tau)^2}{n\theta^2(\beta-(n-1))}+\frac{(n-1+\tau)^2}{n(n-1)\theta^2} & \mbox{ if } \beta<\beta_{J}\\&\\ \frac{n}{(1-\tau)^2\beta} & \mbox{ if } \beta\ge\beta_{J}.\ea\r.$$

(iii) When $Q=I-J$, $\pi_{i\ge2}\ge\pi_1$.
Spokes are the first nodes to be targeted and, then, also the intervention on the hub becomes active.
The transition occurs at $$\beta_{I-J}=1+\sqrt{1+n(n-2)(1+\tau)^2}\,.$$ 
Then, the resulting optimal $Q$-volatility is $$\Phi_Q(z(\beta)) \! = \! \l\{\ba{ll} \!\!\!\! \frac{n-1}{n(1+\tau)^2}+\frac{(n-1)\big(1+n(n-2)(1+\tau)^2}{n(1+\tau)^2(\beta-1)\big)} & \!\!\!\! \mbox{ if } \beta<\beta_{I-J}\\&\\ \!\!\!\! \frac{n-1}{n\beta}\l(\frac{1+\sqrt{1+n(n-2)(1+\tau)^2}}{1+\tau}\r)^2 & \!\!\!\! \mbox{ if } \beta\ge\beta_{I-J}.\ea\r.$$

Assuming $n=7$ and $\tau=0.5$, Figure \ref{fig:star-phi} shows the behavior of $\Phi_Q(z(\beta))$ when $Q=I$ (solid line), $Q=J$ (dashed line), and $Q=I-J$ (dotted line). Vertical dashed lines represent values of $\beta_Q$.
It is possible to prove that $\beta_I\le\beta_J\le\beta_{I-J}$ for all $\tau\in(0,1)$ and all $n\ge3$.
\begin{figure}
\centering
\includegraphics{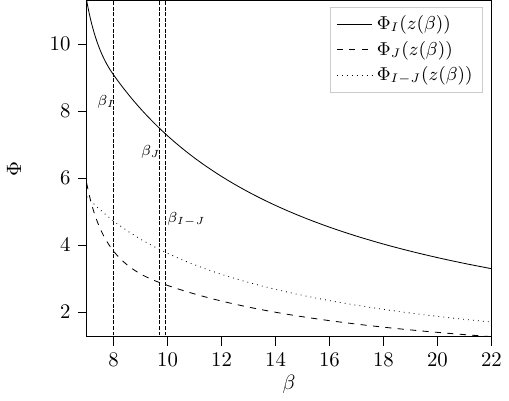}
\caption{\label{fig:star-phi} Behavior of $\Phi_Q(z(\beta))$ for star network of Example \ref{ex:star-1} with $n=7$ and $\tau=\ups=0.5$, when $Q$ is equal to $I,J=n^{-1}\1\1'$, and $I-J$.}
\end{figure}
\end{example}

\section{Robust volatility reduction}\label{sec:min-max}
In this section, we study a robust version of the optimal volatility reduction problem where the system planner is interested in designing a shock attenuation vector $z$ that minimizes the system volatility against arbitrary idiosyncratic shocks $w_i$ whose variances $\sigma_i^2 \ge 0$ satisfy
\be\label{sigma-constraint}\sum_{i=1}^{n}\eta_i^2\sigma_i^2\le 1\,,\ee
where $\eta_i>0$, $i=1,2,\ldots,n$ are given weights.

Formally, for a symmetric positive semi-definite matrix $Q$, we shall solve the min-max problem 
\be\label{prob:min-max}\min_{\substack{\ds z \in \mc Z_\beta}}\ \max_{\substack{\ds\sigma \geq 0:\ds\eqref{sigma-constraint}}}\Phi_Q(z,\sigma)\,,\ee
where the notation $\Phi_Q(z,\sigma)$ is introduced in order to explicitly indicate the dependence of the $Q$-volatility of the network equilibrium on both the intervention $z$ and the vector $\sigma$ of the idiosyncratic shocks. 

Equation \eqref{prob:min-max} may be interpreted as an adversarial min-max optimization problem. First, the system planner chooses an intervention $z$ in $\mc Z_\beta$ in order to minimize the $Q$-volatility of the network equilibrium \eqref{system} in the worst case scenario, then the adversary chooses the vector  $\sigma$ of the standard deviations of the idiosyncratic shocks within the constraint \eqref{sigma-constraint} in order to maximize the $Q$-volatility of the network equilibrium itself.

\begin{remark} It follows from Proposition \ref{prop:volatility-uncorrelated} that the robust optimal volatility reduction problem \eqref{prob:min-max} can be rewritten as 
\be\label{robust-rewritten}\min_{\ds z \in \mc Z_\beta}\ \max_{\substack{\ds\sigma\ge0:\eqref{sigma-constraint}}}\sum_{i=1}^n\pi_i^2\frac{\sigma_i^2}{z_i^2}\,.\ee
Eq. \eqref{robust-rewritten} can be interpreted as the value of a 2-player zero-sum strategic game where the players are the system planner, choosing action $a=z^2$ from $\mc A=\{a\in\R^n:\,a\ge\1,\,\sum_i\alpha_i^2a_i\le\beta\}$, and the adversary, choosing action $b=\sigma^2$ from the set $\mc B=\{b\in\R^n:\,b\ge0,\,\sum_i\eta_i^2b_i\le1\}$, while the utility functions are $$u_1(a,b) =-\sum_{i=1}^n\pi_i^2\frac{b_i}{a_i}\,,\qquad u_2(b,a)=\sum_{i=1}^n\pi_i^2\frac{b_i}{a_i}\,.$$
Since both sets $\mc A$ and $\mc B$ are nonempty, compact, and convex, and $u_1(a,b)=-u_2(b,a)$ is continuous,  concave in $a$ for every $b$ in $\mc B$, and convex in $b$ for every $a$ in $\mc A$, a classical result in game theory \cite[Theorem 1.2]{Fudenberg.Tirole:91} implies  that there exists a Nash equilibrium $(a^*,b^*)$, and that  for every Nash equilibrium $$\min_{a\in\mc A}\max_{b\in\mc B}\sum_{i=1}^n\pi_i^2\frac{b_i}{a_i}=\sum_{i=1}^n\pi_i^2\frac{b_i^*}{a_i^*}=\max_{b\in\mc B}\min_{a\in\mc A}\sum_{i=1}^n\pi_i^2\frac{b_i}{a_i}\,.$$
Hence, $$\min_{\ds z \in \mc Z_\beta} \ \max_{\substack{\ds\sigma\ge0:\eqref{sigma-constraint}}}\!\! \Phi_Q(z,\sigma)\!=\!\!
\max_{\substack{\ds\sigma\ge0:\eqref{sigma-constraint}}}\ \min_{\ds z \in \mc Z_\beta} \! \Phi_Q(z,\sigma)\,.$$
\end{remark}

The following simple result provides an explicit solution to the inner maximization in \eqref{prob:min-max}. 
\begin{lemma}\label{lemma:max}
Let $A$ in $\R^{n\times n}$ be a Schur stable interaction matrix and let $w_i$, for $i=1,\ldots,n$, be idiosyncratic shocks.  
Then, for every  symmetric positive semi-definite matrix $Q$ in $\R^{n\times n}$,  positive vector $\eta$ in $\R^n$ and intervention $z\ge\1$ in $\R^n$, $$\max_{\substack{\ds\sigma\ge0:\ds\eqref{sigma-constraint}}}\Phi_Q(z,\sigma)=\max_{i=1,\ldots,n}\frac{\pi_i^2}{z_i^2\eta_i^2}\,,$$ where $\pi_i$ are the $Q$-target centralities of the nodes $i=1,\ldots,n$, as defined in \eqref{Q-target-centrality}.
\end{lemma}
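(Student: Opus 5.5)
By Proposition \ref{prop:volatility-uncorrelated}, applied to the rescaled shocks $[z]^{-1}w$, we have $\Phi_Q(z,\sigma)=\sum_i \pi_i^2\sigma_i^2/z_i^2$. This is linear in the vector $b=\sigma^2$. The plan is to reduce the inner maximization to maximizing a linear function over a simplex-like set, and then read off the optimum at a vertex.

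\emph{Change of variables.} Set $p_i=\eta_i^2\sigma_i^2\ge0$. Constraint \eqref{sigma-constraint} becomes $\sum_i p_i\le1$, and the objective becomes
$$\sum_i \frac{\pi_i^2}{z_i^2\eta_i^2}\,p_i=\sum_i c_i p_i\,,\qquad c_i=\frac{\pi_i^2}{z_i^2\eta_i^2}\ge0\,.$$
The map $\sigma\mapsto p$ is a bijection between $\{\sigma\ge0: \eqref{sigma-constraint}\}$ and $\{p\ge0:\sum_i p_i\le1\}$, because $\eta_i>0$ and the inverse is $\sigma_i=\sqrt{p_i}/\eta_i$. The two maxima therefore coincide.

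\emph{Upper bound.} For any feasible $p$,
$$\sum_i c_ip_i\le \Big(\max_j c_j\Big)\sum_i p_i\le \max_j c_j\,,$$
where the last step uses $c_j\ge0$ and $\sum_i p_i\le 1$.

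\emph{Attainment.} Pick $i^*\in\arg\max_j c_j$ and take $p=e_{i^*}$, that is, $\sigma_{i^*}=1/\eta_{i^*}$ and $\sigma_i=0$ for $i\ne i^*$. This choice is feasible and achieves $c_{i^*}$. Since the bound is attained, the supremum is a maximum and equals $\max_i \pi_i^2/(z_i^2\eta_i^2)$.

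There is no real obstacle here. The only points needing care are checking that the substitution preserves feasibility in both directions and that the sign condition $c_i\ge0$ is used in the bounding step. Both are immediate because $\pi_i^2\ge0$, $z_i\ge1>0$, and $\eta_i>0$.
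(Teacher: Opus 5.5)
Your proof is correct and follows essentially the same route as the paper. The paper likewise writes $\Phi_Q(z,\sigma)$ via Proposition~\ref{prop:volatility-uncorrelated}, bounds it by $\max_i \pi_i^2/(z_i^2\eta_i^2)$ times $\sum_i\eta_i^2\sigma_i^2\le1$, and attains the bound by concentrating all variance on a maximizing node $i^*$ with $\sigma_{i^*}=1/\eta_{i^*}$; your change of variables $p_i=\eta_i^2\sigma_i^2$ just repackages that argument.
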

\begin{proof}
It follows from \eqref{PhiQ-intervention-uncorrelated} that
\be\label{inequalities}\ba{rcl}
\Phi_Q(z,\sigma)
&=&\ds\sum_{i=1}^{n}\frac{\sigma_i^2\pi_i^2}{z_i^2}\\[15pt]
&=&\ds\sum_{i=1}^{n}\frac{\pi_i^2}{z_i^2\eta_i^2}\eta_i^2\sigma_i^2\\[15pt]
&\le&\ds\max_{1\le i\le n}\frac{\pi_i^2}{z_i^2\eta_i^2}\sum_{i=1}^n\eta_i^2\sigma_i^2\\[15pt]
&\le&\ds\max_{1\le i\le n}\frac{\pi_i^2}{z_i^2\eta_i^2}\,,\ea\ee
where the last step follows from \eqref{sigma-constraint}. 
Notice that both inequalities in \eqref{inequalities} hold true as equalities for $\sigma=\sigma^*$ with  
$$\sigma_i^*=\l\{\ba{lcl}1/\eta_i&\se&i=i^*\\0&\se&i\ne i^*\,,\ea\r.$$
for some $i^*$ such that $i^* \in \arg\max\{\pi_{i}/(z_i\eta_i), i = 1,\cdots,n\}$. 
Since $\sum_{i=1}^{n}(\eta_i\sigma^*_i)^2=1$, we have that $\sigma^*$ satisfies \eqref{sigma-constraint}. Therefore, we have shown that  
$$\ba{rcl}\ds\max_{1\le i\le n}\frac{\pi_i^2}{z_i^2\eta_i^2}
&=&\sum_{i=1}^{n}\frac{(\sigma_i^*)^2\pi_i^2}{z_i^2}\\
&\le&\ds\max_{\substack{\ds\sigma\ge0:\ds\eqref{sigma-constraint}}}\Phi_Q(z,\sigma)\\
&\le&\ds\max_{1\le i\le n}\frac{\pi_i^2}{z_i^2\eta_i^2}\,,\ea$$
which readily implies the claim. 
\qed\end{proof}

We now define
\be\label{ovomegai-def}\bar\omega_i=\frac{\pi_i}{\eta_i}\,,\qquad i=1,\ldots,n\,,\ee
and let
$$\bar\omega_{\min}=\min_{i=1,\ldots,n}\,\bar\omega_i\,,\qquad \bar\omega_{\max}=\max_{i=1,\ldots,n}\,\bar\omega_i\,.$$
We assume that $\bar\omega_{\max}>0$, otherwise the problem is trivial. 
Notice that the function $\bar\xi:(0,\bar\omega_{\max}]\to[\beta_{\min},+\infty)$ defined by $$\bar\xi(t)=\sum_{i=1}^{n}\alpha_i^2\l(1\vee\frac{\bar\omega_i^2}{t^2}\r)\,,$$ is strictly decreasing with $$\lim_{t\downarrow0}\bar\xi(t)=+\infty\,,\qquad \bar\xi(\bar\omega_{\max})=\beta_{\min}\,.$$
Hence, its inverse function 
\be\label{ovzeta-def}\bar\zeta:[\beta_{\min},+\infty)\to(0,\bar\omega_{\max}]\,,\qquad \bar\zeta(\beta)=\bar\xi^{-1}(\beta)\,,\ee
is well defined and strictly decreasing itself.

The following is the second main result of the paper, providing an explicit characterization of the optimal robust volatility reduction intervention in case of idiosyncratic shocks. 
\begin{theorem}\label{theo:minmax-uncor}
Let $A$ in $\R^{n\times n}$ be a Schur stable interaction matrix and let $w_i$, for $i=1,\ldots,n$, be idiosyncratic shocks.  
Then, for every  symmetric positive semi-definite matrix $Q$ in $\R^{n\times n}$, positive vectors $\alpha$ and $\eta$  in $\R^n$, and budget $\beta$ satisfying \eqref{feasible-budget-threshold}, 
the optimal solution $\bar z(\beta)$ in $\R^n$ of the $Q$-volatility reduction problem \eqref{prob:min-max} is unique and defined by
\be\label{optimal-2} \bar z_i(\beta)=1\vee{\frac{\bar\omega_i}{\bar \zeta(\beta)}}\,,\qquad i=1,2,\ldots,n\,,\ee
where $\bar\omega_i$ and $\bar\zeta(\beta)$ are defined in \eqref{ovomegai-def} and \eqref{ovzeta-def}, respectively. 
\end{theorem}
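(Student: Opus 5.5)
By Lemma \ref{lemma:max}, the robust problem \eqref{prob:min-max} reduces to the deterministic problem
$$\min_{z\in\mc Z_\beta}\ \max_{i=1,\ldots,n}\frac{\bar\omega_i^2}{z_i^2}\,,$$
where $\bar\omega_i=\pi_i/\eta_i$. Since the objective is a maximum rather than a sum, I will not use a Lagrangian argument. The plan is to show directly that $\bar z(\beta)$ in \eqref{optimal-2} attains the value $\bar\zeta^2(\beta)$ and that no feasible $z$ does better. The key observation is that the objective is at most $t^2$ if and only if $z_i\ge\bar\omega_i/t$ for all $i$.

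\emph{Step 1 (feasibility and value).} Clearly $\bar z(\beta)\ge\1$. Exactly as in \eqref{constraint=}, we have $\sum_i\alpha_i^2\bar z_i^2(\beta)=\bar\xi(\bar\zeta(\beta))=\beta$, so $\bar z(\beta)\in\mc Z_\beta$. Moreover, for every $i$, $\bar\omega_i/\bar z_i(\beta)=\bar\omega_i\wedge\bar\zeta(\beta)\cdot(\bar\omega_i/\bar\omega_i)$. More precisely, $\bar\omega_i/\bar z_i(\beta)\le\bar\zeta(\beta)$, with equality for every $i$ with $\bar\omega_i\ge\bar\zeta(\beta)$. In particular equality holds at $i$ with $\bar\omega_i=\bar\omega_{\max}$, since $\bar\zeta(\beta)\le\bar\omega_{\max}$. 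Hence the value at $\bar z(\beta)$ is exactly $\bar\zeta^2(\beta)$.

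\emph{Step 2 (optimality).} Let $z\in\mc Z_\beta$ have value $s^2:=\max_i\bar\omega_i^2/z_i^2$. Then $z_i\ge\bar\omega_i/s$ and $z_i\ge1$, so $z_i^2\ge 1\vee\bar\omega_i^2/s^2$. If $s>\bar\omega_{\max}$, then $s>\bar\zeta(\beta)$ trivially. Otherwise $s\in(0,\bar\omega_{\max}]$; note that $s>0$ because $\bar\omega_{\max}>0$. In this case
$$\beta\ge\sum_i\alpha_i^2z_i^2\ge\bar\xi(s)\,,$$
and since $\bar\xi$ is strictly decreasing, $s\ge\bar\zeta(\beta)$. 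Thus $\bar z(\beta)$ is optimal.

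\emph{Step 3 (uniqueness).} This is the main point of care. Uniqueness fails for $\max$ objectives in general: nodes with $\bar\omega_i<\bar\zeta(\beta)$ do not affect the maximum, so one might worry about freedom in those coordinates. The budget rules this out. Let $z$ be optimal, so $s=\bar\zeta(\beta)$. From Step 2, $z_i^2\ge 1\vee\bar\omega_i^2/\bar\zeta^2(\beta)=\bar z_i^2(\beta)$ for all $i$. On the other hand,
$$\sum_i\alpha_i^2 z_i^2\le\beta=\sum_i\alpha_i^2\bar z_i^2(\beta)\,.$$
Since every $\alpha_i>0$, each termwise inequality must be an equality, so $z=\bar z(\beta)$.

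The case $\beta=\beta_{\min}$ is covered by the same argument: there $\mc Z_\beta=\{\1\}$ and $\bar\zeta(\beta_{\min})=\bar\omega_{\max}$, consistent with \eqref{optimal-2}.
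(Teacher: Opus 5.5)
Your proof is correct, but the optimality step takes a different route from the paper's.

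\textbf{Common ground.} Both arguments rest on the same inequality: any $z\ge\1$ with $\max_i\bar\omega_i^2/z_i^2\le\tau$ must satisfy $\sum_i\alpha_i^2z_i^2\ge\bar\xi(\sqrt\tau)$. This is the paper's \eqref{a2}.

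\textbf{The paper's route.} It embeds this inequality in a Lagrangian certificate. It introduces the convex, piecewise-smooth function $h(\tau)=\tau+\lambda(\bar\xi(\sqrt\tau)-\beta)$ and computes its one-sided derivatives. From these it shows that every $\lambda\in[a(\beta),b(\beta)]$ makes $\tau=\bar\zeta^2(\beta)$ a minimizer of $h$. It then deduces $L(z,\lambda)\ge\bar\zeta^2(\beta)=f(\bar z(\beta))$ for all $z\ge\1$ and invokes Lemma~\ref{lemma:simple}.

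\textbf{Your route.} You close the argument directly in the primal. A feasible $z$ with value $s^2$ needs budget at least $\bar\xi(s)$, so $\bar\xi(s)\le\beta=\bar\xi(\bar\zeta(\beta))$. Strict monotonicity of $\bar\xi$ then gives $s\ge\bar\zeta(\beta)$.

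\textbf{What each buys.} Your version is shorter and more elementary. It needs no convexity of $h$, no one-sided derivatives, no multiplier interval, and no Lemma~\ref{lemma:simple}. The paper's version keeps the same Lagrangian template used for Theorem~\ref{theo:min-uncor}. It also exhibits an explicit interval $[a(\beta),b(\beta)]$ of valid multipliers. The endpoints of that interval are the right and left rates at which the optimal value $\bar\zeta^2(\beta)$ decreases as the budget grows. Your Step~1 (feasibility via $\bar\xi(\bar\zeta(\beta))=\beta$, value via \eqref{eq:cost_robust}) and your uniqueness step (componentwise domination plus the budget constraint with $\alpha_i>0$) coincide with the paper's.

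\textbf{Two cosmetic points.} First, the expression $\bar\omega_i\wedge\bar\zeta(\beta)\cdot(\bar\omega_i/\bar\omega_i)$ in Step~1 is garbled and undefined when $\bar\omega_i=0$. The clean identity is simply $\bar\omega_i/\bar z_i(\beta)=\bar\omega_i\wedge\bar\zeta(\beta)$. Second, the case $s>\bar\omega_{\max}$ in Step~2 is vacuous, since $z\ge\1$ already forces $s\le\bar\omega_{\max}$.
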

\begin{proof} 
It follows from Lemma \ref{lemma:max} that the  robust $Q$-volatility reduction problem \eqref{prob:min-max} reduces to the optimization problem \eqref{optim} with 
$$f(z)=\max_{1\le i\le n}\frac{\pi_i^2}{\eta_i^2z_i^2}=\max_{1\le i\le n}\frac{\bar\omega_i^2}{z_i^2}\,,\qquad g(z)=\sum_{i=1}^{n}\alpha_i^2z_i^2-\beta\,.$$ 
Clearly, $\bar z(\beta)\ge\1$.
Moreover, 
$$\ba{rcl}\ds g(\bar z(\beta))
&=&\ds\sum_{i=1}^{n}\alpha_i^2\bar z_i^2(\beta)-\beta\\[15pt]
&=&\ds\sum_{i=1}^{n}\alpha_i^2\left(1\vee\frac{\bar\omega_i^2}{\bar \zeta^2(\beta)}\right)-\beta\\[15pt]
&=&\ds\bar\xi(\bar\zeta(\beta))-\beta\\[5pt]&=&0\,.\ea$$

Now, let $$a(\beta)=\left(\sum_{i:\bar\omega_i\ge\bar\zeta(\beta)}\alpha_i^2\frac{\bar\omega_i^2}{\bar\zeta^4(\beta)}\right)^{-1}\,,$$ $$b(\beta)=\left(\sum_{i:\bar\omega_i>\bar\zeta(\beta)}\alpha_i^2\frac{\bar\omega_i^2}{\bar\zeta^4(\beta)}\right)^{-1}\,,$$
with $b(\beta) = +\infty$ if the sum is empty. Since $\bar \zeta(\beta) \le \bar \omega_{\max}$, the first sum is non-empty, so that $0 < a(\beta) \le b(\beta)$. 
Define the function $h:(0,\bar\omega_{\max}^2]\to\R$ $$h(\tau)=\tau+\lambda\left(\bar\xi(\sqrt\tau)-\beta\right)\,.$$
Observe that $h(\tau)$ is a continuous, piecewise differentiable, and convex function of $\tau\in(0,\bar\omega_{\max}^2]$ whose left and right derivatives are given by $$h'_-(\tau)=1-\lambda\sum_{i:\bar\omega_i^2\ge\tau}\alpha_i^2\frac{\bar\omega_i^2}{\tau^2}\,,\qquad h'_+(\tau)=1-\lambda\sum_{i:\bar\omega_i^2>\tau}\alpha_i^2\frac{\bar\omega_i^2}{\tau^2}\,.$$
Then, for every $\lambda$ in $[a(\beta),b(\beta)]$, we have that 
$$\ba{rcl}\ds h'_-\left(\bar\zeta^2(\beta)\right)
&=&\ds1-\lambda\sum_{i:\bar\omega_i\ge\bar\zeta(\beta)}\alpha_i^2\frac{\bar\omega_i^2}{\bar\zeta^4(\beta)}\\[15pt]
&\le&0\\
&\le&\ds 1-\lambda\sum_{i:\bar\omega_i>\bar\zeta(\beta)}\alpha_i^2\frac{\bar\omega_i^2}{\bar\zeta^4(\beta)}\\[15pt]
& = &\ds h'_+\left(\bar\zeta^2(\beta)\right)\,,\ea$$
so that $h(\tau)$ attains its minimum at $\tau=\bar\zeta^2(\beta)$ with 
\be\label{a1} \ba{rcl}\ds\min_{\tau\in(0,\bar\omega_{\max}^2]}h(\tau)
&=&\ds h\left(\bar\zeta^2(\beta)\right)\\
&=&\bar\zeta^2(\beta)+\lambda\left(\bar\xi(\bar\zeta(\beta))-\beta\right)\\[4pt]
&=&\bar\zeta^2(\beta)\,.\ea\ee
Moreover, since $\bar\zeta(\beta)\le\bar\omega_{\max}$,
\be\label{eq:cost_robust}
f(\bar z(\beta))\!=\!\max_{i}\frac{\bar\omega_i^2}{1\vee\frac{\bar\omega_i^2}{\bar\zeta^2(\beta)}}
\!=\!\max_{i}\big(\bar\omega_i^2\wedge\bar\zeta^2(\beta)\big)\!=\!\bar\zeta^2(\beta)\,.
\ee
Now, observe that, for every $\tau\in(0,\bar\omega_{\max}^2]$ and every $z$ in $\R^n$ such that $z\ge\1$ and $f(z)=\max_i{\bar\omega_i^2}/{z_i^2}=\tau$, we have that 
\be\label{a2} \ba{rcl}g(z)
&=&\ds\sum_{i=1}^{n}\alpha_i^2z_i^2-\beta\\[15pt]
&\ge&\ds \sum_{i=1}^{n}\alpha_i^2\left(1\vee\frac{\bar\omega_i^2}{\tau}\right)-\beta\\[15pt]
&=&\ds\bar\xi(\sqrt\tau)-\beta\,,
\ea\ee
so that, for every $\lambda\ge0$,
$$
L(z,\lambda)=f(z)+\lambda g(z)\ \ge\ \tau+\lambda\big(\bar\xi(\sqrt\tau)-\beta\big)=h(\tau)\,.$$
It then follows from \eqref{a1} and \eqref{a2} that, for every $z$ in $\R^n$ such that $z\ge\1$ and $\lambda$ in $[a(\beta),b(\beta)]$, we have $$L(z,\lambda)=f(z)+\lambda g(z)\ge \min_{\tau\in(0,\bar\omega_{\max}^2]}h(\tau)=\bar\zeta^2(\beta)=f(\bar z(\beta))\,.$$
We now apply Lemma \ref{lemma:simple} and get that $\bar z(\beta)$ defined in \eqref{optimal-2} is an optimal solution of \eqref{prob:min-max}. Finally, let $z^*$ be any optimal solution. Then $f(z^*)=f(\bar z(\beta))=\bar\zeta^2(\beta)$,
i.e., by the definition of $f$, $z_i^*\ge\bar\omega_i/\bar\zeta(\beta)$ for every $i$, so that
$z_i^*\ge1\vee\bar\omega_i/\bar\zeta(\beta)=\bar z_i(\beta)$. Since
$\sum_i\alpha_i^2\big(z_i^{*2}-\bar z_i^2(\beta)\big)=g(z^*)-g(\bar z(\beta))\le0$ and
$\alpha_i>0$, it follows that $z^*=\bar z(\beta)$.
\qed\end{proof}

Theorem \ref{theo:minmax-uncor}  provides an explicit formula for the optimal solution of the robust volatility reduction problem in case of idiosyncratic shocks. 
As in the non-robust case \eqref{optimal-1}, solution \eqref{optimal-2} has a water-filling-type structure.
In fact, the key difference between \eqref{optimal-2} and \eqref{optimal-1} is that the role played by the quantity $\omega_i=\sqrt{\pi_i\sigma_i{/\alpha_i}}$ in \eqref{optimal-1} is now played in \eqref{optimal-2}  by $\bar\omega_i=\pi_i/\eta_i$ that is the ratio between the $Q$-target centrality of node $i$ and the square root of the cost for the adversary to induce a shock of unitary variance in $i$. Note, in particular, that the sequence of activations in the robust volatility reduction problem does not depend on the intervention cost $\alpha$, since $\bar \omega$ is independent of $\alpha$.
Similarly to our previous discussion following Theorem \ref{theo:min-uncor}, for every budget level $\beta\ge\beta_{\min}$, we may rewrite \eqref{optimal-2} as $$\bar z_i(\beta)=\l\{\ba{lcl}\ds{\bar\omega_i}/{\bar\zeta(\beta)}&\se&i\in\bar{\mc A}(\beta)\\[10pt]1&\se&i\notin\bar{\mc A}(\beta)\,,\ea\r.$$ where $$\bar{\mc A}(\beta)=
\left\{i\!=\!1,\!\dots\!,n\!: \bar\omega_i>\bar\zeta(\beta)\right\}\,,$$ is the set of active nodes on which the optimal robust intervention concentrates its effort. 
We may also observe that if $\beta_{\min} < \beta \le \bar\xi(\max\{\bar\omega_i|\bar\omega_i<\bar\omega_{\max}\})$, then the active nodes for the optimal robust volatility reduction intervention are $\bar{\mc A}(\beta)=\{i:\bar\omega_i=\bar\omega_{\max}\}$, whereas if $\bar\omega_{\min} > 0$ and $\beta>\bar\xi(\bar\omega_{\min})$, then $\bar{\mc A}(\beta)=\{1,\ldots,n\}$, i.e., the  optimal robust volatility reduction intervention effectively acts on every node.

As a consequence of Theorem \ref{theo:minmax-uncor}, we can derive the following result characterizing the value of the minimal robust $Q$-volatility. To this end, let $\bar\Phi_Q(\bar z(\beta)) = \max_{\sigma}\Phi_Q(\bar z(\beta),\sigma)$.

\begin{corollary}\label{coro:minmax-uncor-PhiQ}
Let $A$ in $\R^{n\times n}$ be a Schur stable interaction matrix and let $w_i$, for $i=1,\ldots,n$, be idiosyncratic shocks.  
Then, for every  symmetric positive semi-definite matrix $Q$ in $\R^{n\times n}$, positive vectors $\alpha$ and $\eta$  in $\R^n$, and budget $\beta \ge \beta_{\min}$, the minimal $Q$-volatility achievable in the robust optimization problem \eqref{prob:min-max} is 
\be\label{optimal-2-PhiQ} \bar\Phi_Q(\bar z(\beta))
=\bar\zeta^2(\beta)\,,\ee
where $\bar\omega_i$ and $\bar\zeta(\beta)$ are defined in \eqref{ovomegai-def} and \eqref{ovzeta-def}, respectively. 
\end{corollary}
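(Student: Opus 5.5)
The corollary follows by combining Lemma \ref{lemma:max} with Theorem \ref{theo:minmax-uncor}; essentially no new estimate is needed. By definition, $\bar\Phi_Q(\bar z(\beta))=\max_{\sigma}\Phi_Q(\bar z(\beta),\sigma)$, where the maximum is taken over $\sigma\ge0$ satisfying \eqref{sigma-constraint}. Theorem \ref{theo:minmax-uncor} states that $\bar z(\beta)$ is the (unique) minimizer of the outer problem in \eqref{prob:min-max}. Hence this quantity is exactly the minimal robust $Q$-volatility, and it only remains to evaluate it.

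To evaluate it, I apply Lemma \ref{lemma:max} with $z=\bar z(\beta)\ge\1$. This gives
$$\bar\Phi_Q(\bar z(\beta))=\max_{i}\frac{\pi_i^2}{\eta_i^2\bar z_i^2(\beta)}=\max_i\frac{\bar\omega_i^2}{\bar z_i^2(\beta)}\,.$$
Substituting \eqref{optimal-2}, each term equals
$$\frac{\bar\omega_i^2}{1\vee\bar\omega_i^2/\bar\zeta^2(\beta)}=\bar\omega_i^2\wedge\bar\zeta^2(\beta)\,.$$
Every term is therefore at most $\bar\zeta^2(\beta)$. Equality holds for any $i$ with $\bar\omega_i=\bar\omega_{\max}$, because $\bar\zeta(\beta)\le\bar\omega_{\max}$ by the range of $\bar\zeta$ in \eqref{ovzeta-def}. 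The maximum is thus $\bar\zeta^2(\beta)$. This is the same computation as \eqref{eq:cost_robust} in the proof of Theorem \ref{theo:minmax-uncor}, and I would simply cite it.

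The only point needing care is the boundary case $\beta=\beta_{\min}$, which the corollary allows, unlike Corollary \ref{coro:min-uncor-PhiQ}. There the feasible set is $\{\1\}$, and $\bar\zeta(\beta_{\min})=\bar\omega_{\max}$ by \eqref{extremes}'s analogue for $\bar\xi$. So $\bar z=\1$, and the formula gives $\bar\omega_{\max}^2=\max_i\pi_i^2/\eta_i^2$, which matches Lemma \ref{lemma:max} directly. No real obstacle arises: the main step is just checking that the maximizing index attains the cap $\bar\zeta(\beta)$.
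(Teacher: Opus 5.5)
Your proof is correct and matches the paper's: apply Lemma~\ref{lemma:max} at $z=\bar z(\beta)$, then reuse the computation \eqref{eq:cost_robust} from the proof of Theorem~\ref{theo:minmax-uncor}. Your separate treatment of $\beta=\beta_{\min}$ is harmless but unnecessary, since the general argument, which relies only on $\bar\zeta(\beta)\le\bar\omega_{\max}$, already covers it.
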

\begin{proof} 
	It follows from Lemma~\ref{lemma:max} that
	$\max_{\sigma}\Phi_Q(\bar z(\beta),\sigma)=f(\bar z(\beta)) = \bar\zeta^2(\beta)$,
	where the last equality has been shown in the proof of Theorem~\ref{theo:minmax-uncor}, see \eqref{eq:cost_robust}.
\qed\end{proof}

Corollary \ref{coro:minmax-uncor-PhiQ} provides an explicit expression for the minimal achievable value $\bar\Phi_{Q}\l(\bar z(\beta)\r)$ in the robust $Q$-volatility reduction problem \eqref{prob:min-max}.
In particular, equation \eqref{optimal-2-PhiQ} expresses the dependence of the minimal robust $Q$-volatility $\bar\Phi_{Q}(\bar z(\beta))$ on the budget $\beta$, on the $Q$-target centrality vector $\pi$, and on the cost $\eta_i^2$ for the adversary to induce a shock of unitary variance in the nodes $i=1,\ldots,n$. 
Notice that, similarly to the non-robust case, if $\bar\omega_{\min} > 0$ and $\beta\ge\bar\xi(\bar\omega_{\min})$, then \eqref{optimal-2-PhiQ} reduces to
\be\label{optimal-2-PhiQ-HB}\bar\Phi_Q(\bar z(\beta))=\bar\zeta^2(\beta)=\frac{1}{\beta}\sum_{i=1}^n\alpha_i^2\bar\omega_i^2\,.\ee
This shows that for sufficiently high budget levels, where all nodes are active in the robust optimal intervention, the minimal robust $Q$-volatility $\Phi_{Q}(\bar z(\beta))$ is inversely proportional to the budget level $\beta$. 

In fact, it follows from Lemma \ref{lemma:max} that the objective of the system planner when designing the optimal robust intervention $\bar z(\beta)$ is to minimize $$\max_{i=1,\ldots,n}\pi_i/(\eta_i z_i)=\max_{i=1,\ldots,n}\bar\omega_i/z_i\,,$$ over all feasible interventions $z\ge\1$ satisfying the budget constraint \eqref{budget-constraint}.
When $\beta\ge\bar\xi(\bar\omega_{\min})$, the available budget is sufficient to allow for an optimal intervention  $\bar z(\beta)$ that is able to equalize all the ratios $\bar\omega_i/\bar z_i(\beta)$: as a result, in this case the most severe idiosyncratic shocks satisfying the constraint \eqref{sigma-constraint} can induce a $Q$-volatility of at most $\bar\zeta^2(\beta)=\frac{1}{\beta}\sum_{i=1}^n\alpha_i^2\bar\omega_i^2$.
On the other hand, when the budget level is such that  $\beta<\bar\xi(\bar\omega_{\min})$, no feasible intervention  $z$ is able to equalize all the ratios $\bar\omega_i/z_i=\pi_i/(\eta_iz_i)$.
A worst-case idiosyncratic shock is then concentrated in any node $i$ that maximizes $\bar\omega_i/z_i$ and the resulting $Q$-volatility is $\bar\zeta^2(\beta)>\frac{1}{\beta}\sum_{i=1}^n\alpha_i^2\bar\omega_i^2$.

\addtocounter{example}{-2}	
\begin{example}[continued]
Consider the network with $n=6$ nodes and interaction matrix \eqref{A6}. 
Assume that $\alpha_i^2=1$ and $\eta_i^2=1$, for $i=1,2,\dots,n$.
Then, for every $i$, $\bar\omega_i=\pi_i$, the $Q$-target centrality of the node.
Denote by $\bar z_Q(\beta)\in\R^n$ the optimal intervention \eqref{optimal-2} with respect to $Q$-volatility and set $J=n^{-1}\1\1'$.

Figure \ref{fig:graph6-z-bar} depicts the behavior of $\bar z_Q(\beta)$ when $Q=I$ (solid line), $Q=J$ (dashed line), and $Q=I-J$ (dotted line).

\begin{figure}
\centering
\includegraphics{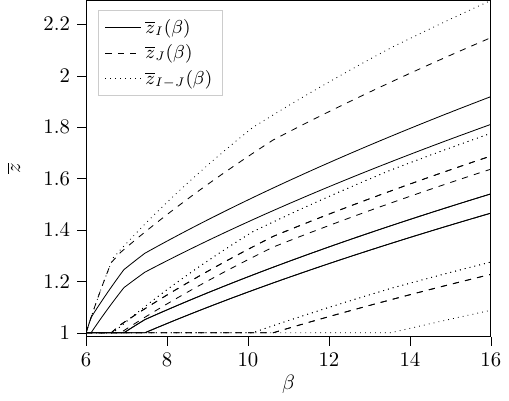}
\caption{\label{fig:graph6-z-bar} Behavior of $\bar z(\beta)$ for network of Example \ref{ex:graph6} when $Q$ is equal to $I,J=n^{-1}\1\1'$, and $I-J$.}
\end{figure}
\end{example}

\begin{example}[continued]
\label{ex:star-2}
Consider a star network with interaction matrix \eqref{Astar}.
Assume that $\alpha_i^2=1$ and $\eta_i^2=1$, for $i=1,2,\dots,n$ and that $\tau=\ups\in(0,1)$. We set $J=n^{-1}\1\1'$ and $\theta=1-\tau^2$.
We now compute the optimal $Q$-volatility using Corollary \ref{coro:minmax-uncor-PhiQ}.

(i) When $Q=I$, $\pi_1\ge\pi_{i\ge2}$.
The transition occurs at $$\bar\beta_I=n-1+\frac{(n-1)(1+(n-1)\tau^2)}{1+\theta^2(n-2)+\frac{\tau^2}{n-1}}\,.$$
Then, the resulting optimal $Q$-volatility is $$\bar\Phi_Q(\bar z(\beta))=\l\{\ba{ll} \frac{n\tau^2+\theta}{\theta^2(\beta-(n-1))} & \mbox{ if } \beta<\bar\beta_I\\&\\ \frac{n-2}{\beta} + \frac{2+(n-1)\tau^2+\frac{\tau^2}{n-1}}{\theta^2\,\beta} & \mbox{ if } \beta\ge\bar\beta_I.\ea\r.$$

(ii) When $Q=J$, $\pi_1\ge\pi_{i\ge2}$.
The transition occurs at $$\bar\beta_{J}=n-1+\l(\frac{1+(n-1)\tau}{1+\frac{\tau}{n-1}}\r)^2\,.$$
Then, the resulting optimal $Q$-volatility is $$\bar\Phi_Q(\bar z(\beta))=\l\{\ba{ll} \frac{(1+(n-1)\tau)^2}{n\theta^2(\beta-(n-1))} & \mbox{ if }\beta<\bar\beta_{J}\\&\\ \frac{(1+(n-1)\tau)^2 + \frac{(n-1+\tau)^2}{n-1}}{n\theta^2\beta} & \mbox{ if } \beta\ge\bar\beta_{J}.\ea\r.$$

(iii) When $Q=I-J$, $\pi_{i\ge2}\ge\pi_1$.
The transition occurs at  $$\bar\beta_{I-J}=1+\frac{1+n(n-2)(1+\tau)^2}{n-1}\,.$$
Then, the resulting optimal $Q$-volatility is $$\bar\Phi_Q(\bar z(\beta))=\l\{\ba{ll} \frac{n-2}{\beta-1}+\frac1{n(1+\tau)^2(\beta-1)} & \mbox{ if } \beta<\bar\beta_{I-J}\\&\\ \frac{n-2}{\beta}+\frac{1}{\beta(1+\tau)^2} & \mbox{ if } \beta\ge\bar\beta_{I-J}.\ea\r.$$ 

Assuming $n=7$ and $\tau=0.5$, Figure \ref{fig:star-phi-bar} shows the behavior of $\bar\Phi_Q(\bar z(\beta))$ when $Q=I$ (solid line), $Q=J$ (dashed line), and $Q=I-J$ (dotted line).
Vertical dashed lines represent values of $\bar\beta_Q$ for $Q$ equal to the three cases.
It is possible to prove that $\beta_Q\le\bar\beta_Q$ for $Q=I,J$, and $I-J$.
The example shows that, in general, it is not true that $\bar\beta_I\le\bar\beta_J\le\bar\beta_{I-J}$ for all $\tau\in(0,1)$.
\begin{figure}
\centering
\includegraphics{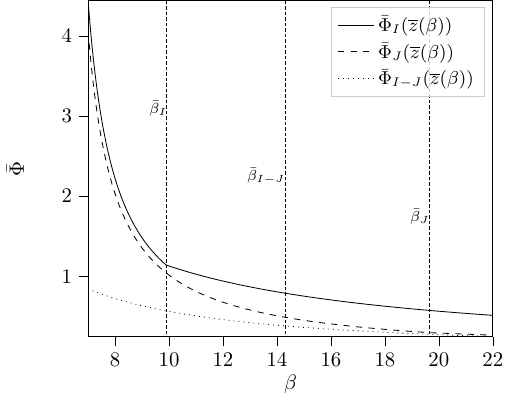}
\caption{\label{fig:star-phi-bar} Behavior of $\bar\Phi_Q(\bar z(\beta))$ for star network of Example \ref{ex:star-2} with $n=7$ and $\tau=\ups=0.5$, when $Q$ is equal to $I,J=n^{-1}\1\1'$, and $I-J$.}
\end{figure}
\end{example}

\section{Conclusion}\label{sec:conclusion}
We have studied optimal volatility reduction problems for linear network equilibrium models and idiosyncratic exogenous shocks.
We have introduced a novel general notion of target centrality and shown that the optimal network intervention has a water-filling-type structure: the nodes $i$ may be ordered by a quantity $\omega_i$ combining target centrality, shock standard deviation, and intervention cost; depending on the available budget, the system planner optimally either concentrates its efforts on the nodes with highest $\omega_i$ (when the available budget is small) or distributes its available efforts on all nodes with an intensity $z_i$ proportional to $\omega_i$ (when the available budget is sufficiently large). We then analyzed a robust version of the problem where the system planner optimizes the intervention against a worst-case vector of shocks whose variances are subject to a linear constraint. We completely characterized the solutions of this problem and showed that they also admit a water-filling-type structure.

Several directions for future work are of interest.
In a companion paper \cite{PartII}, we generalize our results to the case when the exogenous shocks hitting the network are no longer idiosyncratic but rather correlated.
Another relevant direction is to consider deterministic perturbations of network equilibrium models, as in our preliminary work \cite{damonte2022targeting}.
Yet another very interesting direction consists in moving away from network equilibrium models and considering dynamical perturbations of network systems.

\bibliographystyle{IEEEtran}
\bibliography{references2}
\begin{IEEEbiography}[{\includegraphics[width=1in,height=1.25in,clip,keepaspectratio]{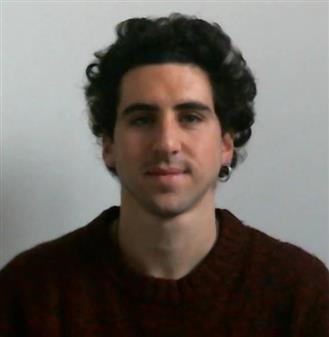}}]
{Luca Damonte} is a Post-Doc Fellow at the Department of Economics and Finance of Luiss University, Rome, Italy.
He graduated in Mathematical Engineering in 2018 from Politecnico di Torino, Italy, and obtained a Ph.D.~in Pure and Applied Mathematics in January 2023 at the same university. 
His research interests lie at the interface between network systems, optimization, and game theory, with a primary goal on investigating targeted intervention problems on socio-economic networks through optimization methods.
\end{IEEEbiography}

\begin{IEEEbiography}[{\includegraphics[width=1in,height=1.25in,clip,keepaspectratio]{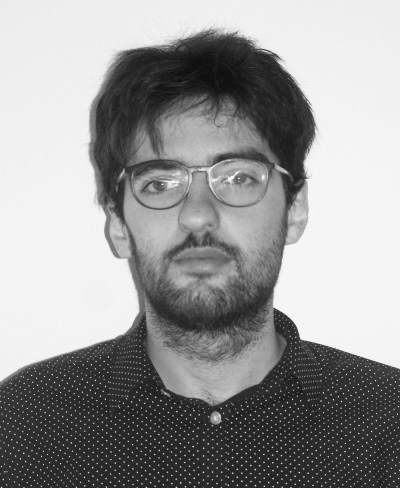}}]{Leonardo Cianfanelli} received the B.Sc.  in Physics and Astrophysics in 2014 from Università di Firenze, Italy, the M.S. in Physics of Complex Systems in 2017 from Università di Torino, Italy, and the PhD in Pure and Applied Mathematics in 2022 from Politecnico di Torino, Italy. He is currently a Research Assistant at the Department of  Mathematical Sciences, Politecnico di Torino, Italy. He was a Visiting Student at the Laboratory for Information and Decision Systems, Massachusetts Institute of Technology, in 2018--2020. His research focuses on control of network systems, with applications to epidemics, transportation, and social and productive networks.
\end{IEEEbiography}

\begin{IEEEbiography}[{\includegraphics[width=1in,height=1.25in,clip,keepaspectratio]{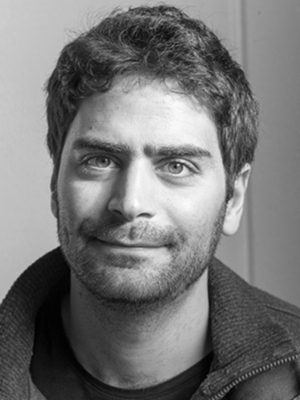}}]
{Giacomo Como}(M'12) is  a  Professor at  the Department  of  Mathematical  Sciences,  Politecnico di  Torino,  Italy. He is also a Senior Lecturer  at  the  Automatic  Control  Department, Lund  University,  Sweden.  He  received the B.Sc., M.S., and Ph.D.~degrees in Applied Mathematics  from  Politecnico  di  Torino, Italy, in  2002,  2004, and 2008, respectively. He was a Visiting Assistant in  Research  at  Yale  University  in  2006--2007  and  a Postdoctoral  Associate  at  the  Laboratory  for  Information  and  Decision  Systems,  Massachusetts  Institute of Technology in  2008--2011. Prof.~Como currently serves as Senior Editor for the \textit{IEEE Transactions on Control of Network Systems}, as Associate  Editor  for \textit{Automatica}, and as the  chair  of the  {IEEE-CSS  Technical  Committee  on  Networks  and  Communications}. He served as Associate Editor for the  \textit{IEEE Transactions on Network Science and Engineering} (2015-2021) and for the \textit{IEEE Transactions on Control of Network Systems} (2016-2022).  He served as the  IPC  chair  of  the  IFAC  Workshop  NecSys'15  and  was a  semiplenary speaker  at  the  International  Symposium  MTNS'16.  
He  is  recipient  of  the 2015  George S. ~Axelby  Outstanding Paper Award.  His  research interests  are in  dynamics,  information,  and  control  in  network  systems, with  applications to  infrastructure,  social,  and  economic networks.
\end{IEEEbiography}

\begin{IEEEbiography}[{\includegraphics[width=1in,height=1.25in,clip,keepaspectratio]{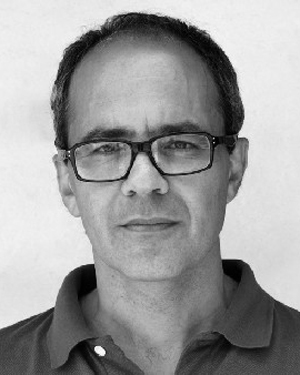}}]
{Fabio Fagnani}
received the Laurea degree in Mathematics from the University of Pisa and the Scuola Normale Superiore, Pisa, Italy, in 1986. He received the PhD degree in Mathematics from the University of Groningen,  Groningen,  The  Netherlands,  in 1991. From 1991 to 1998, he was an Assistant Professor at the Scuola Normale Superiore. In 1997, he was a Visiting Professor at the Massachusetts Institute of Technology. Since 1998, he has been with the Politecnico of Torino. From 2006 to 2012, he acted as Coordinator of the PhD program in Mathematics for Engineering Sciences at Politecnico di Torino. From 2012 to 2019, he served as the Head of the Department of Mathematical Sciences, Politecnico di Torino. His current research focuses on network systems, inferential distributed algorithms, and opinion dynamics. He is an Associate Editor of the \textit{IEEE Transactions on Automatic Control} and served in the same role for the \textit{IEEE Transactions on Network Science and Engineering} and the \textit{IEEE Transactions on Control of Network Systems}.
\end{IEEEbiography}

\end{document}